\documentclass[10pt,twocolumn,twoside]{IEEEtran} 

\usepackage{graphicx}
\usepackage{subcaption}
\usepackage{caption}
\usepackage{amsmath}
\usepackage{amssymb}
\usepackage{amsthm} 
\usepackage{accents}
\usepackage{statex}
\usepackage{cases}
\usepackage{mathrsfs}
\usepackage{tikz,pgfplots}
\pgfplotsset{compat=newest}
\usepackage{xcolor}
\usepackage{caption}
\usepackage{setspace}
\usepackage{booktabs}
\usepackage{array}
\usepackage{cite}
\usepackage{enumitem}
\usepackage{algorithm}
\usepackage{algorithmic}
\usepackage{multirow}
\usetikzlibrary{patterns}
\usetikzlibrary{graphs}
\usetikzlibrary{graphs.standard}
\usetikzlibrary{matrix,arrows,fit,backgrounds,mindmap,plotmarks,decorations.pathreplacing}
\newcolumntype{C}[1]{>{\centering\arraybackslash}p{#1}}

\usetikzlibrary{automata}

\newtheorem{proposition}{Proposition}
\newtheorem{theorem}{Theorem}

\newtheorem{lemma}{Lemma}
\newtheorem{corollary}{Corollary}
\newtheorem{remark}{Remark}

\newlength\figureheight
\newlength\figurewidth

\DeclareMathOperator{\1}{\textbf{1}}

\DeclareMathOperator*{\col}{col}
\DeclareMathOperator*{\tr}{tr}
\DeclareMathOperator*{\cov}{cov}

\newcommand{\revise}[1]{\textcolor{black}{#1}}
\newcommand{\modify}[1]{\textcolor{blue}{#1}}

\tikzstyle{sensor} = [draw, fill=blue!20, rectangle, rounded corners,
minimum height=2em, minimum width=7em]
\tikzstyle{est} = [draw, fill=orange!20, rectangle, rounded corners,
minimum height=2em, minimum width=7em]
\tikzstyle{pinstyle} = [pin edge={to-,thin,black}]

\title{A Distributed Implementation of Steady-State Kalman Filter}

\author{Jiaqi Yan, Xu Yang, Yilin Mo$^*$, and Keyou You
	\thanks{
		The authors are with Department of Automation, BNRist, Tsinghua University. Emails: {jiaqiyan@tsinghua.edu.cn, xu-yang16@mails.tsinghua.edu.cn, ylmo@tsinghua.edu.cn,youky@tsinghua.edu.cn}
	}
\thanks{
			$*$: Corresponding author.
}
}

\begin{document}
\maketitle

\begin{abstract}
This paper studies the distributed state estimation in sensor network, where $m$ sensors are deployed to infer the $n$-dimensional state of a Linear Time-Invariant (LTI) Gaussian system. By a lossless decomposition of optimal steady-state Kalman filter, we show that the problem of distributed estimation can be reformulated as the synchronization of homogeneous linear systems. Based on such decomposition, a distributed estimator is proposed, where each sensor node runs a local filter using only its own measurement, alongside with a consensus algorithm to fuse the local estimate of every node. We prove that the average of estimates from all sensors coincides with the optimal Kalman estimate, and under certain condition on the graph Laplacian matrix and the system matrix, the covariance of estimation error is bounded and the asymptotic error covariance is derived. As a result, the distributed estimator is stable for each single node. We further show that the proposed algorithm has a low message complexity of $\min(m,n)$. Numerical examples are provided in the end to illustrate the efficiency of the proposed algorithm.
\end{abstract}

\begin{IEEEkeywords}
Distributed estimation, Kalman filter, Linear system synchronization, Consensus algorithm.
\end{IEEEkeywords}

\section{Introduction}
The past decades have witnessed remarkable research interests in multi-sensor networked systems. As one of its important focuses, distributed estimation has been widely studied in various applications including robot formation control, environment monitoring, spacecraft navigation (see \cite{subbotin2009design,xie2012fully,luo2005universal,vu2015distributed,jia2016cooperative}). Compared with the centralized architecture, it provides better robustness, flexibility and reliability. 

One fundamental problem in distributed estimation is to estimate the state of an LTI Gaussian system using multiple sensors, where the well-known Kalman filter provides the optimal solution in a centralized manner \cite{anderson2012optimal}. Thus, many research efforts have been devoted to the distributed implementation of Kalman filter. For example, in an early work \cite{bar1986effect}, the authors suggest a fusion algorithm for two-sensor networks, where local estimate of the first sensor is considered as a pseudo measurement of the second one. Due to its ease of implementation, this approach has then inspired the sequential fusion in multi-sensor networks \cite{kim1994development,sun2004multi,chen2019distributed}, where the multiple nodes repeatedly perform the two-sensor fusion in a sequential manner. As the result of serial operation, these algorithms require special communication topology which should be sequentially connected as a ring/chain. In \cite{olfati2007distributed}, Olfati-Saber \textit{et. al} consider the more general network topology. They introduce the consensus algorithms into distributed estimation and propose Kalman-Consensus Filter (KCF), where the average consensus on local estimates is performed. \revise{Since then, various consensus-based distributed estimators have been proposed in literature \cite{olfati2005distributed,olfati2009kalman,battistelli2014consensus,li2011consensus,battistelli2016stability,del2009distributed,kar2010gossip,ma2016gossip,cattivelli2010diffusion,hu2011diffusion,cattivelli2008diffusion,farina2010distributed,haber2013moving}. For example, instead of doing consensus on local estimates, \cite{battistelli2014consensus} suggests achieving consensus respectively on noisy measurements and inverse-covariance matrices. On the other hand, Battistelli \textit{et. al} \cite{battistelli2014kullback} find that, by performing consensus on the Kullback-Leibler average of local probability density function, estimation stability is also guaranteed. They further prove that, if the single-consensus step is used, this approach is reduced to the well-known covariance intersection fusion rule \cite{chen2002estimation,he2018consistent}.} Since the consensus-based estimators usually require multiple consensus steps during each sampling period, they generate better estimation performance. 

\revise{In Fig.~\ref{fig:existing}, we present the general information flow of the existing consensus-based estimation algorithms, where $\Delta_i(k)$ is the information transmitted by sensor $i$ and to be fused by consensus algorithms, which could be the local estimates (\hspace{1pt}\cite{olfati2005distributed,olfati2007distributed}), measurements (\hspace{1pt}\cite{olfati2009kalman,li2011consensus,battistelli2014consensus,das2016consensus+}), or information matrices (\hspace{1pt}\cite{battistelli2014kullback,battistelli2018distributed}). It is noticed from the figure that the consensus/synchronization process is usually coupled with the local filter in these works, making it hard to analyze the performance of local estimates. Due to this fact, while the aforementioned algorithms are successful in distributing the fusion task over multiple nodes and providing stable local estimates, i.e. the error covariance is proved to be bounded at each sensor side, the exact calculation of error covariance can hardly be obtained. Moreover, the global optimality (namely, whether performance of the algorithm can converge to that of the centralized Kalman filter) is also difficult to be analyzed and guaranteed in some works.} 
\begin{figure}
	\centering
	\resizebox{0.38\textwidth}{!}{\usetikzlibrary{decorations.markings}
\tikzset{middlearrow/.style={
        decoration={markings,
            mark= at position 0.75 with {\arrow{#1}} ,
        },
        postaction={decorate}
    }
}

\begin{tikzpicture}[auto, node distance=1.8cm]
\definecolor{lightblue}{HTML}{AFE1FD}
\definecolor{lightred}{HTML}{F2BBC3}

\node at (0,3.4) {$y_i(k)$};
\node at (4,3.4) {$y_j(k)$};

\draw [->, line width=0.3mm, -latex] (0,3.2) -- (0,2.55);
\draw [->, line width=0.3mm, -latex] (4,3.2) -- (4,2.55);

\node [sensor,align=center] (est1) {Linear system};
\node [sensor, right of=est1,node distance=4cm,align=center] (est2) {Linear system};
\node [est,above of=est1,align=center,node distance=2.2cm] (sensor1) {Local filter};
\node [est,above of=est2,align=center,node distance=2.2cm] (sensor2) {Local filter};

\draw[middlearrow={latex}, line width=0.3mm] (-0.4,1.85) .. controls(-0.8, 1.1) .. (-0.4,0.35);
\draw[middlearrow={latex}, line width=0.3mm] (0.4,0.35) .. controls(0.8, 1.1) .. (0.4,1.85);
\draw[middlearrow={latex}, line width=0.3mm] (3.6,1.85) .. controls(3.2, 1.1) .. (3.6,0.35);
\draw[middlearrow={latex}, line width=0.3mm] (4.4,0.35) .. controls(4.8, 1.1) .. (4.4,1.85);

\draw [->, line width=0.3mm, -latex] (est1) -- node {$\Delta_i(k)$} (est2);
\draw [->, line width=0.3mm, -latex] (est2) -- node {$\Delta_j(k)$} (est1);


\draw[dashed, line width=0.3mm] (-1.5, 1.15) -- (5.5, 1.15);
\draw[dashed, line width=0.3mm] (-1.5, -0.8) -- (5.5, -0.8);

\node at (2,0.9) {{Synchronization}};

\draw [->, line width=0.3mm, -latex] (est1) --  (0,-1.2);
\draw [->, line width=0.3mm, -latex] (est2) -- (4,-1.2);

\node at (0,-1.5) {$\hat{x}_i(k)$};
\node at (4,-1.5) {$\hat{x}_j(k)$};
\end{tikzpicture}}
	\caption{\revise{The information flow of most existing algorithms, where sensors $i$ and $j$ are immediate neighbors.}}
	\label{fig:existing}
\end{figure}


It is worth noticing that in theory, the gain of the Kalman filter converges to a steady-state gain exponentially fast\cite{shi2011sensor}, which can be calculated off-line. Moreover, in practice, a fixed gain estimator is usually implemented, which has the same asymptotic performance as the time-varying Kalman filter. Hence, this paper focuses on the distributed implementation of the centralized steady-state Kalman filter. 
\revise{In contrast to most of the existing algorithms, we decouple the local filter from the consensus process.} Such decoupling enables us to provide a new framework for designing distributed estimators, by reformulating the problem of distributed state estimation into that of linear system synchronization. We, hence, are able to leverage the methodologies from latter field to propose solutions for distributed estimation. To be specific, in the synchronization of linear systems, the dynamics of each agent is governed by an LTI system, the control input to which is generated using the local information within the neighborhood, in order to achieve asymptotic consensus on the local states of agents. Over the past years, lots of research efforts have been devoted to this area (see \cite{you2011network,you2013consensus,xu2019distributed,gu2011consensusability,amato2001finite,su2012two} for examples) by designing synchronization algorithms that can handle various network constraints. Exploiting the results therein, the distributed estimator in this work is designed through two phases as below:

1) (Local measurement processing) A lossless decomposition of steady-state Kalman filter is proposed, where each sensor node runs a local estimator based on this decomposition using solely its own measurement.

2) (Information fusion via consensus) The sensor infers the local estimates of all the others via a modified consensus algorithm designed for achieving linear system synchronization. 

The contributions of this paper are summarized as follows:

1) By removing assumptions regarding the eigenvalues of system matrix, this paper extends, in a non trivial way, the results in \cite{mo2016secure}, and thus develops the local filters for losslessly decomposing Kalman filter in estimating the general systems. \modify{(Lemma~\ref{lmm:decompose})}

2) Through the decomposition of Kalman filter, this paper bridges two different fields and \modify{makes it possible to leverage a general class of algorithms designed for achieving the synchronization of linear systems to solve the problem of distributed state estimation. By doing so,  we can propose stable distributed estimators under different communication constraints, such as time delay, switching topology, random link failures, \textit{etc}. (Theorem~\ref{thm:general})}

3) For certain synchronization algorithm, e.g., \cite{you2011network}, the stability criterion of the proposed estimator is established. \modify{Moreover, in contrast to the existing literature, the covariance of the estimation error can be exactly derived by solving Lyapunov equations. (Theorem~\ref{thm:optimal}, Theorem~\ref{thm:observable}, and Corollary~\ref{col:error})}

4) The designed estimator enjoys low communication cost, where the size of message sent by each sensor is $\min\{m,n\},$ with $n$ and $m$ being dimensions of the state and measurement respectively. \modify{(Remark~\ref{rmk:n<m})}

Some preliminary results are reported in our previous work \cite{2101.10689}, where most of the proofs are missing. This paper further extends the results in \cite{2101.10689} by computing the exact asymptotic error covariance, instead of only showing the stability of proposed algorithms. The extension to the more general random communication topology is also added. Moreover, a model reduction method is further proposed in this work to reduce the message complexity from $m$ to $\min\{m,n\}$. 

\textit{Notations}: For vectors $v_{i} \in \mathbb{R}^{m_{i}},$ the vector $\left[v_{1}^{T}, \ldots, v_{N}^{T}\right]^{T}$ is defined by $\col(v_{1}, \ldots, v_{N}).$ Moreover, $A\otimes B$ indicates the Kronecker product of matrices $A$ and $B$. \revise{Throughout this paper, we define a stochastic signal as ``stable" if its covariance is bounded at any time.}

The remainder of this paper is organized as follows. Section \ref{sec:formulation} introduces the preliminaries and formulates the problem of interest. A lossless decomposition of optimal Kalman filter is given in Section \ref{sec:decompose}, where a model reduction approach is further proposed to reduce the system order. With the aim of realizing the optimal Kalman filter, distributed solutions for state estimation are given and analyzed in Section \ref{sec:algorithm}. We then discuss some extensions in Section \ref{sec:discuss} and validate performance of the developed estimator through numerical examples in Section \ref{sec:simulation}. Finally, Section \ref{sec:conclusion} concludes the paper.

\section{Problem Formulation}\label{sec:formulation}
In this paper, we consider the LTI system as given below:
\begin{equation}\label{eqn:plant}
  x(k+1) = Ax(k) + w(k),
\end{equation}
where $x(k)\in\mathbb{R}^n$ is the system state, $w(k)\sim \mathcal{N} (0, Q )$ is independent and identically distributed (i.i.d) Gaussian noise with zero mean and covariance matrix $Q\geq 0$. The initial state $x(0)$ is also assumed to be Gaussian with zero mean and covariance matrix $\Sigma\geq 0$, and is independent from the process noise $\{w(k)\}$.

A network consisting of $m$ sensors is monitoring the above system. The measurement from each sensor $i\in\{1,\cdots,m\}$ is given by \footnote{The results in this paper can be readily generalized to cases where the sensor outputs a vector measurement, by treating each entry independently as a scalar measurement.}:
\begin{equation}\label{eqn:sensoroutput}
  y_i(k) = C_ix(k) + v_i(k),
\end{equation}
where $y_i(k)\in\mathbb{R}$ is the output of sensor $i$, $C_i$ is an $n$-dimensional row vector, and $v_i(k)\in\mathbb{R}$ is the Gaussian measurement noise.

By stacking the measurement equations, one gets
\begin{equation}\label{eqn:sensormatrix}
  y(k) = Cx(k) + v(k),
\end{equation}
where
\begin{equation}
  \begin{split}
    y(k) \triangleq {\left[\begin{array}{c}
	  y_{1}(k) \\
	  \vdots \\
	  y_{m}(k)
	  \end{array}\right],} \; C \triangleq {\left[\begin{array}{c}
	  C_{1} \\
	  \vdots \\
	  C_{m}
    \end{array}\right],} \;
    v(k) \triangleq {\left[\begin{array}{c}
	  v_{1}(k) \\
	  \vdots \\
	  v_{m}(k)
    \end{array}\right]},
  \end{split}
\end{equation}
and $v(k)$ is zero-mean i.i.d. Gaussian noise with covariance $R\geq0$ and is independent from $w(k)$ and $x(0)$. 

Throughout this paper, we assume that $(A,C)$ is observable. On the other hand, $(A, C_i)$ may not necessarily be observable, i.e., a single sensor may not be able to observe the whole state space. 

\subsection{Preliminaries: the centralized Kalman filter}
If all measurements are collected at a single fusion center, the centralized Kalman filter is optimal for state estimation purpose, and provides a fundamental limit for all other estimation schemes. For this reason, this part will briefly review the centralized solution given by the Kalman filter.

Let us denote by $P(k)$ the error covariance of estimate given by Kalman filter at time $k$.
Since $(A,C)$ is observable, it is well-known that the error covariance will converge to the steady state \cite{anderson2012optimal}:
\begin{align}
	P=\lim _{k \rightarrow \infty} P(k). \label{eqn:KFcov}
\end{align}

Since the operation of a typical sensor network lasts for an extended period of time, we assume that the Kalman filter is in the steady state, or equivalently $\Sigma=P$, which results in a steady-state Kalman filter with fixed gain\footnote{Notice that even if $\Sigma \neq P$, the Kalman estimate converges to the steady-state Kalman filter, i.e., the steady-state estimator is asymptotically optimal.}
\begin{align}\label{eqn:KFgain}
	K=P C^{T}\left(C P C^{T}+R\right)^{-1}.
\end{align}
Accordingly, the optimal Kalman estimate is computed recursively as 
\begin{equation}\label{eqnn:optimalest}
	\begin{split}
		\hat{x}(k+1) &= A\hat{x}(k)+K(y(k+1)-CA\hat{x}(k))\\&=(A-KCA)\hat{x}(k)+Ky(k+1).
	\end{split}
\end{equation}

It is clear that the optimal estimate \eqref{eqnn:optimalest} requires the information from all sensors. However, in a distributed framework, each sensor is only capable of communicating with immediate neighbors, rendering the centralized solution impractical. Therefore, this paper is devoted to the implementation of Kalman filter in a distributed fashion.

\section{Decomposition of Kalman Filter}\label{sec:decompose}
In this section, we shall provide a local decomposition of the Kalman filter \eqref{eqnn:optimalest}, where the Kalman estimate can be recovered as a linear combination of the estimates from local filters. \revise{This section extends, in a non-trivial way, the results in \cite{mo2016secure} by removing the assumptions on the eigenvalues of system matrix therein, and thus proposes the local filter for estimating the general systems.} The results in this part would further help us to design distributed estimation algorithms in the next sections. 

Without loss of generality, let the system matrix be
	\begin{equation}\label{eqn:diagA}
		A = \begin{bmatrix}
			A^u & \\
			& A^s
		\end{bmatrix},
	\end{equation}
	where $A^u\in\mathbb{R}^{n^u\times n^u}$ and $A^s\in\mathbb{R}^{n^s\times n^s}$, such that any eigenvalue of $A^u$ lies on or outside the unit circle while the eigenvalues of $A^s$ are strictly within the unit circle. It thus follows from \eqref{eqn:plant} that 
	\begin{equation}\label{eqn:xs}
		x^s(k+1) = A^s x^s(k)+Jw(k),
	\end{equation}
	where $J = \begin{bmatrix}
		0 & \1_{n^s}
	\end{bmatrix}\in\mathbb{R}^{n_s\times n}$ and $x(k) = \col(x^u(k), x^s(k)).$ Accordingly, $C_i$ is partitioned as
	\begin{equation}\label{eqn:diagC}
		\begin{aligned}
			C_i = \begin{bmatrix}
				C_i^u & C_i^s
			\end{bmatrix},
		\end{aligned}
\end{equation} where $C_i^u\in\mathbb{R}^{n^u\times 1}$ and $C_i^s\in\mathbb{R}^{n^s\times 1}$. 

\subsection{Local decomposition of Kalman filter}

To locally decompose Kalman filter, we first introduce the following lemmas, the proofs of which are given in appendix:
\begin{proposition}\label{prop:controllable}
	If $\Lambda$ is a non-derogatory\footnote{A matrix is defined to be non-derogatory if every eigenvalue of it has geometric multiplicity $1$.} Jordan matrix, then both $(\Lambda,\,\mathbf 1)$ and $(\Lambda^T,\mathbf 1)$ are controllable.
\end{proposition}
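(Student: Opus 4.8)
The plan is to verify controllability through the Popov--Belevitch--Hautus (PBH) test rather than by directly examining the controllability matrix. Write $\Lambda = \mathrm{diag}(J_1,\ldots,J_p)$, where $J_\ell\in\mathbb{R}^{n_\ell\times n_\ell}$ is the Jordan block associated with the eigenvalue $\lambda_\ell$; since $\Lambda$ is non-derogatory, the $\lambda_\ell$'s are pairwise distinct. Partition $\mathbf{1}=\col(b_1,\ldots,b_p)$ conformably, so each $b_\ell$ is the all-ones vector of length $n_\ell$. Recall that $(\Lambda,\mathbf{1})$ is controllable if and only if $\operatorname{rank}[\,\mu I-\Lambda\ \ \mathbf{1}\,]=n$ for every eigenvalue $\mu$ of $\Lambda$, with $n=\sum_\ell n_\ell$.

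First I would fix $\mu=\lambda_\ell$ and analyze the left null space of $\mu I-\Lambda$. Because the eigenvalues are distinct, the blocks $\mu I-J_j$ with $j\neq\ell$ are invertible, whereas $\mu I-J_\ell=-N$, where $N$ is the $n_\ell\times n_\ell$ shift matrix with ones on the superdiagonal; this block has rank $n_\ell-1$ and its left null space is the line spanned by the coordinate vector corresponding to the \emph{last} row of the $\ell$-th block. Hence $\mu I-\Lambda$ has rank $n-1$ and a one-dimensional left null space, spanned by a vector $w_\ell$ supported on the $\ell$-th block. Appending the column $\mathbf{1}$ restores full row rank exactly when $w_\ell^{T}\mathbf{1}\neq 0$; but $w_\ell^{T}\mathbf{1}$ equals the entry of $\mathbf{1}$ in the last row of block $\ell$, which is $1\neq 0$. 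Since $\ell$ was arbitrary, the PBH condition holds at every eigenvalue, so $(\Lambda,\mathbf{1})$ is controllable.

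For $(\Lambda^{T},\mathbf{1})$ the argument is identical, except that $\Lambda^{T}$ is block diagonal with the transposed Jordan blocks $J_\ell^{T}$, so the $\ell$-th block of $\lambda_\ell I-\Lambda^{T}$ is $-N^{T}$, whose left null space is now spanned by the coordinate vector of the \emph{first} row of that block. Pairing with $\mathbf{1}$ again yields the entry $1\neq 0$, and the PBH test passes once more.

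The computations are elementary; the point that genuinely uses the hypothesis is the non-derogatory assumption. It is exactly what guarantees that each eigenvalue sits in a single Jordan block, so that $\mu I-\Lambda$ has a left null space of dimension one --- if an eigenvalue appeared in two or more blocks, that null space would be at least two-dimensional and no single input column could restore full row rank, so controllability would fail. Accordingly, I expect the only (minor) subtlety to be bookkeeping the left null spaces of the shift matrix and its transpose correctly, and stating the block decomposition cleanly.
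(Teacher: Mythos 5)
Your PBH argument is correct and complete. For the record, the paper itself never actually supplies a proof of this proposition --- the sentence ``the proofs of which are given in appendix'' is only honored for Lemmas~1 and~2, and Proposition~1 is left as a standard fact --- so there is no authorial proof to compare against. Your reasoning is the standard one and all the details check out: for $\mu=\lambda_\ell$ the only singular diagonal block of $\mu I-\Lambda$ is $-N$, whose left null space is spanned by the last coordinate vector of that block (and by the first coordinate vector for $-N^{T}$), and in either case pairing with $\mathbf 1$ gives the nonzero entry $1$, so the PBH rank condition holds; you also correctly isolate the non-derogatory hypothesis as what forces each eigenvalue into a single block and hence a one-dimensional left null space. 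The only caveat worth flagging is that your computation implicitly treats complex Jordan blocks over $\mathbb{C}$; if one insisted on the \emph{real} Jordan form with $2\times2$ rotation blocks for complex eigenvalues, the bookkeeping would change slightly, but the paper's usage (a $\Lambda$ chosen with a prescribed characteristic polynomial) is consistent with your reading, so this is not a gap.
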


\begin{lemma}\label{lmm:observable}
Let $(X,p)$ be controllable, where $X\in \mathbb R^{n\times n}$ and $p \in \mathbb R^n$. For any $q\in\mathbb R^n$, if $X+pq^T$ and $X$ do not share any eigenvalues, then $(X+pq^T,q^T)$ is observable, or equivalently $(X^T+qp^T,q)$ is controllable.
\end{lemma}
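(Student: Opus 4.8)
The plan is to prove the controllability statement $(X^T + q p^T, q)$ is controllable, since the claimed equivalence with observability of $(X + p q^T, q^T)$ is just the standard duality $\mathrm{obsv}(F,H) \iff \mathrm{ctrb}(F^T,H^T)$ applied with $F = X + pq^T$, $H = q^T$. For controllability I would use the Popov–Belevitch–Hautus (PBH) test: I must show that for every eigenvalue $\lambda$ of $X^T + q p^T$, the matrix $\begin{bmatrix} X^T + q p^T - \lambda I & q \end{bmatrix}$ has full row rank $n$. Equivalently, there is no nonzero left eigenvector $z$ with $z^T(X^T + q p^T) = \lambda z^T$ and $z^T q = 0$.

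So suppose, for contradiction, that such a $z \ne 0$ exists. From $z^T q = 0$ the rank-one term drops out: $z^T(X^T + q p^T) = z^T X^T + (z^T q) p^T = z^T X^T$, hence $z^T X^T = \lambda z^T$, i.e. $z$ is also a left eigenvector of $X^T$ for the same eigenvalue $\lambda$. Now the key point is to bring in the hypothesis that $X + p q^T$ and $X$ share no eigenvalue. Since $\lambda$ is an eigenvalue of $X^T + q p^T = (X + p q^T)^T$, it is an eigenvalue of $X + p q^T$; and since $\lambda$ is an eigenvalue of $X^T$, it is an eigenvalue of $X$. This contradicts the disjointness of the spectra of $X + pq^T$ and $X$, so no such $z$ exists and the PBH test passes.

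Actually I realize the hypothesis "$(X,p)$ controllable" has not yet been used — so I should double check whether the above argument is already complete, or whether I have mis-stated something. Re-examining: the argument above shows $(X^T + qp^T, q)$ is controllable using only the spectral-disjointness hypothesis; controllability of $(X,p)$ appears unnecessary for this direction. It is plausible the authors include it either because it is the natural setting (it will be satisfied in the application via Proposition 1, with $X = \Lambda^T$, $p = \mathbf{1}$) or because their own proof routes through a different identity. In my write-up I would present the clean PBH argument above and simply remark that the controllability hypothesis on $(X,p)$ is not needed for this conclusion, or alternatively note it guarantees the symmetric statement (controllability of $(X + pq^T, p)$-type facts) used elsewhere.

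The only real obstacle I anticipate is a bookkeeping one: being careful that "share no eigenvalue" is used in the correct direction (we derived that $\lambda$ would be a common eigenvalue, and that is exactly what is forbidden), and making sure the transpose/duality translation between the observability and controllability phrasings is stated without sign or conjugation slips — over $\mathbb{R}$ with real matrices this is routine, but I would state it explicitly. If one instead wanted to avoid PBH, an alternative is a direct Krylov-subspace argument showing $\mathrm{span}\{q, (X^T+qp^T)q, \dots\}$ is all of $\mathbb{R}^n$, but that seems messier than the eigenvector argument, so PBH is the route I would take.
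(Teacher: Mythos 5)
Your proof is correct and is essentially the same as the paper's: both apply the PBH rank test to $(X^T+qp^T,q)$, take a nonzero left null vector $v$ with $v^Tq=0$ so that the rank-one term drops out, and conclude that the resulting $\lambda$ would be a common eigenvalue of $X$ and $X+pq^T$, contradicting the spectral-disjointness hypothesis. Your side observation is also accurate — the paper's own proof never invokes the controllability of $(X,p)$ either.
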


\begin{lemma}\label{lmm:regular}
Let $(X,p)$ be controllable, where $X\in \mathbb R^{n\times n}$ and $p \in \mathbb R^n$. Denote the characteristic polynomial $X$ as $\varphi(s) = \det(sI-X)$. 
	Let $Y\in \mathbb R^{m\times m}$ and $q \in \mathbb R^m$. Suppose that
	\begin{equation}
		\varphi(Y)q = 0,
	\end{equation}
	then there exists $T\in\mathbb{R}^{m\times n}$ which solves the equation below:
	\begin{equation}\label{eqn:T1}
		TX=YT,\;Tp = q.
	\end{equation}
\end{lemma}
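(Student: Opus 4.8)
The plan is to construct $T$ explicitly from the controllability of $(X,p)$ and then verify the two required identities, with the hypothesis $\varphi(Y)q=0$ entering precisely at the Cayley--Hamilton step. First, since $(X,p)$ is controllable, the controllability matrix $\mathcal C\triangleq[\,p,\;Xp,\;\dots,\;X^{n-1}p\,]\in\mathbb R^{n\times n}$ is invertible, so $p,Xp,\dots,X^{n-1}p$ is a basis of $\mathbb R^n$. This allows me to define a unique linear map $T\colon\mathbb R^n\to\mathbb R^m$ by prescribing its action on this basis, namely $T X^j p = Y^j q$ for $j=0,\dots,n-1$; concretely, $T\triangleq[\,q,\;Yq,\;\dots,\;Y^{n-1}q\,]\,\mathcal C^{-1}$, which is automatically real because all the ingredients are real.

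Next I would check the two conditions in \eqref{eqn:T1}. The equation $Tp=q$ is immediate, being the $j=0$ instance of the construction. For $TX=YT$, since $\mathcal C$ is invertible it suffices to verify $TX\mathcal C = YT\mathcal C$. For $\ell=1,\dots,n-1$ the $\ell$-th columns of both sides coincide trivially: on the left it is $TX^{\ell}p=Y^{\ell}q$, and on the right it is $Y\cdot TX^{\ell-1}p = Y\cdot Y^{\ell-1}q = Y^{\ell}q$. Hence the whole identity reduces to matching the last columns, i.e.\ to the single scalar-vector equation $TX^{n}p = Y^{n}q$.

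The crux is this last equation, and it is exactly where the hypothesis is used. Writing $\varphi(s)=s^{n}+c_{n-1}s^{n-1}+\dots+c_{0}$, the Cayley--Hamilton theorem gives $\varphi(X)=0$, so $X^{n}p = -(c_{n-1}X^{n-1}+\dots+c_{0}I)p$. Applying $T$ and using $TX^{j}p=Y^{j}q$ yields $TX^{n}p = -(c_{n-1}Y^{n-1}+\dots+c_{0}I)q = Y^{n}q-\varphi(Y)q$, and since $\varphi(Y)q=0$ we conclude $TX^{n}p=Y^{n}q$, which finishes the proof. The main (and essentially only) obstacle is recognizing that the construction forces the consistency condition $TX^{n}p=Y^{n}q$ and that this is precisely the algebraic content of $\varphi(Y)q=0$; everything else is routine bookkeeping with the controllability basis.
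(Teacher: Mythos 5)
Your proposal is correct and follows essentially the same route as the paper: define $T$ by mapping the controllability basis $\{X^jp\}$ to $\{Y^jq\}$ (i.e.\ $T=R_YR_X^{-1}$), get $Tp=q$ for free, and reduce $TX=YT$ to the single identity $TX^np=Y^nq$, which follows from Cayley--Hamilton on $X$ together with the hypothesis $\varphi(Y)q=0$. No gaps.
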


With the above preparations, let us consider the optimal Kalman estimate in \eqref{eqnn:optimalest}. For simplicity, we denote by $K_j$ the $j$-th column of the Kalman gain $K$. Namely, $K = [K_1,\cdots,K_m]$. Accordingly, \eqref{eqnn:optimalest} can be rewritten as 
\begin{equation}\label{eqn:optimalKF}
	\hat{x}(k+1) = (A-KCA)\hat{x}(k)+\sum_{i=1}^m K_iy_i(k+1).
\end{equation}

Notice that $A-KCA$ is stable. It is clear that we can always find a Jordan matrix $\Lambda\in \mathbb R^{n\times n}$, such that $\Lambda$ is strictly stable, non-derogatory and has the same characteristic polynomial of $A-KCA$. In view of Proposition~\ref{prop:controllable}, we conclude that $(\Lambda, \mathbf 1)$ is controllable. Therefore, by Lemma~\ref{lmm:regular}, we can always find matrices $F_i$'s, such that the following equalities hold for $i=1,\cdots,m$:
\begin{equation}\label{eqn:F_i}
	F_i \Lambda = (A-KCA)F_i,\;F_i \mathbf 1_n = K_i.
\end{equation}

Suppose each sensor $i$ performs the following local filter solely based on its own measurements:
\begin{equation}\label{eqn:xi}
  \hat \xi_i(k+1)=\Lambda\hat\xi_i(k)+ \1_ny_i(k+1),
\end{equation}
where $\hat\xi_i(k)$ is the output of local filter from sensor $i$, and $\1_n\in\mathbb{R}^n$ is a vector of all ones. Then it is proved the optimal Kalman filter can be decomposed as a weighted sum of local estimates $\hat \xi_i(k)$'s, as stated below:
\begin{lemma}
  \label{lmm:decompose}
  Suppose each sensor performs the local filter \eqref{eqn:xi}. The optimal Kalman estimate \eqref{eqnn:optimalest} can be recovered from the local estimates $\hat \xi_i(k), i=1,2,\cdots,m$ as
  \begin{equation}\label{eqn:localdecompose}
    \hat{x}(k) = \sum_{i=1}^{m}F_i \hat\xi_i(k),
  \end{equation}
  where $F_i$ is defined in \eqref{eqn:F_i}. 
\end{lemma}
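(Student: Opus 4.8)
The plan is to verify that the proposed reconstruction $z(k):=\sum_{i=1}^m F_i\hat\xi_i(k)$ evolves under exactly the same affine recursion as the centralized Kalman estimate $\hat x(k)$ in \eqref{eqnn:optimalest}, driven by the same measurement sequence, and then to close the argument by matching initial conditions.

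First I would substitute the local-filter dynamics \eqref{eqn:xi} into the definition of $z(k)$ and collect terms:
\begin{align}
  z(k+1) &= \sum_{i=1}^m F_i\!\left(\Lambda\hat\xi_i(k) + \mathbf{1}_n\, y_i(k+1)\right) \nonumber \\
         &= \sum_{i=1}^m (F_i\Lambda)\hat\xi_i(k) + \sum_{i=1}^m (F_i\mathbf{1}_n)\, y_i(k+1). \nonumber
\end{align}
Then I would invoke the two defining identities of $F_i$ in \eqref{eqn:F_i}. The relation $F_i\Lambda=(A-KCA)F_i$ turns the first sum into $(A-KCA)\sum_{i}F_i\hat\xi_i(k)=(A-KCA)z(k)$, and $F_i\mathbf{1}_n=K_i$ turns the second sum into $\sum_i K_i\,y_i(k+1)=Ky(k+1)$, using $K=[K_1,\dots,K_m]$ together with the stacking $y(k+1)=\col(y_1(k+1),\dots,y_m(k+1))$. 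Hence $z(k+1)=(A-KCA)z(k)+Ky(k+1)$, which is precisely \eqref{eqnn:optimalest}. Consequently the discrepancy $e(k):=z(k)-\hat x(k)$ satisfies the homogeneous recursion $e(k+1)=(A-KCA)e(k)$; under the natural initialization $\hat\xi_i(0)=0$ for all $i$ together with the steady-state convention $\hat x(0)=0$ (or, more generally, any initialization with $\sum_i F_i\hat\xi_i(0)=\hat x(0)$) we get $e(0)=0$, hence $e(k)\equiv 0$, which is \eqref{eqn:localdecompose}.

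I expect essentially no obstacle here: the only points needing care are (i) the existence of matrices $F_i$ meeting both requirements in \eqref{eqn:F_i}, which is already settled in the paragraph preceding the lemma --- $(\Lambda,\mathbf{1}_n)$ is controllable by Proposition~\ref{prop:controllable}, and $\varphi(A-KCA)K_i=0$ by Cayley--Hamilton since $\Lambda$ and $A-KCA$ share the characteristic polynomial $\varphi(s)=\det(sI-\Lambda)$, so Lemma~\ref{lmm:regular} applies --- and (ii) the initialization bookkeeping above. If one prefers not to assume a matched initialization, one can instead note that $A-KCA$ is stable, so $e(k)\to 0$ geometrically and the decomposition holds asymptotically, which suffices in the steady-state regime considered; I would relegate this remark to a footnote and state the lemma in the matched-initialization form.
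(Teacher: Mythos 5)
Your proof is correct and follows essentially the same route as the paper's: multiply the local-filter recursion by $F_i$, invoke the two identities in \eqref{eqn:F_i}, and sum over $i$ to recover the centralized recursion \eqref{eqnn:optimalest}. Your explicit treatment of the initialization (and the fallback asymptotic argument via stability of $A-KCA$) is a point the paper's proof leaves implicit, but it does not change the substance of the argument.
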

\begin{proof}
By multiplying both sides of the recursive equation \eqref{eqn:xi} by $F_i$, we arrive at
\begin{equation}
	F_i	\hat \xi_i(k+1)= F_i \Lambda \hat \xi_i(k)+ F_i \1_n y_i(k+1).
\end{equation}
Then it follows from \eqref{eqn:F_i} that
\begin{equation}
	F_i	\hat \xi_i(k+1)= (A-KCA)F_i \hat \xi_i(k)+ K_i y_i(k+1),
\end{equation}
Summing up the above equation for all $i=1,\cdots,m$ and comparing it with \eqref{eqn:optimalKF}, we can conclude that \eqref{eqn:localdecompose} holds.
\end{proof}

\revise{Notice that the equality in Lemma~\ref{lmm:decompose} surely holds. That means the Kalman filter can be perfectly recovered by \eqref{eqn:localdecompose}. We hence claim that \eqref{eqn:localdecompose} is a lossless decomposition of optimal Kalman filter.} To better illustrate the ideas, the information flow of centralized Kalman filter and local decomposition \eqref{eqn:localdecompose} is given in Fig~\ref{fig:infoflow}.

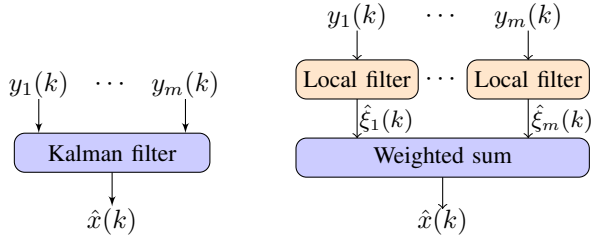
\begin{figure}[htbp]
	\begin{minipage}[b]{0.23\textwidth}
		\centering
		\begin{tikzpicture}[auto, node distance=1.8cm,>=latex',scale=0.65]

\node at (0,3.5) {$y_1(k)$};
\node at (1.5,3.5) {$\cdots$};
\node at (3,3.5) {$y_m(k)$};

\draw [->] (0,3.2) -- (0,2.5);
\draw [->] (3,3.2) -- (3,2.5);

\draw [rounded corners, fill=blue!20](-0.5,2.5) rectangle (3.5,1.7);
\node at (1.5,2.1) {{\small Kalman filter}};

\draw [->] (1.5,1.7) -- (1.5,1);
\node at (1.5,0.7) {$\hat{x}(k)$};
\end{tikzpicture}
	\end{minipage}%
	\begin{minipage}[b]{0.26\textwidth}
		\centering
		\begin{tikzpicture}[scale=0.65]

\node at (0.5,3.4) {$y_1(k)$};
\node at (2.25,3.4) {$\cdots$};
\node at (4,3.4) {$y_m(k)$};

\draw [->] (0.5,3.1) -- (0.5,2.5);
\draw [->] (4,3.1) -- (4,2.5);

\draw [rounded corners, fill=orange!20](-0.75,2.5) rectangle (1.75,1.7);
\node at (0.5,2.1) {{\small Local filter}};

\draw [rounded corners, fill=orange!20](2.75,2.5) rectangle (5.25,1.7);
\node at (4,2.1) {{\small Local filter}};

\node at (2.25,2.1) {$\cdots$};

\draw [->] (0.5,1.7) -- (0.5,0.9);
\draw [->] (4,1.7) -- (4,0.9);

\node at (1.1,1.3) {{\small $\hat{\xi}_1(k)$}};
\node at (4.7,1.3) {{\small $\hat{\xi}_m(k)$}};

\draw [rounded corners, fill=blue!20](-0.75,0.9) rectangle (5.25,0.1);
\node at (2.25,0.5) {{\small Weighted sum}};

\draw [->] (2.25,0.1) -- (2.25,-0.55);
\node at (2.25,-0.8){$\hat{x}(k)$};

%
%
%
%
\end{tikzpicture}
	\end{minipage}
	\caption{The information flow of centralized Kalman filter (left hand), and local decomposition of Kalman filter \eqref{eqn:localdecompose} (right hand).}
	\label{fig:infoflow}
\end{figure}

\subsection{A reformulation of \eqref{eqn:xi} with stable inputs}
It is noted that
 the system matrix $A$ may be unstable which implies that the covariance of measurement $y(k)$ is not necessarily bounded. As a result, we need to redesign \eqref{eqn:xi} using the stable residual $z_i(k)$ as an input instead of the raw measurement $y_i(k)$. The main reason for this reformulation is to make the consensus algorithm feasible \revise{and develop stable distributed estimators, which will be further discussed in the proof of Theorem~\ref{thm:observable}}. 
 
Towards the end, notice that $(\Lambda,1)$ is controllable, $\Lambda$ is stable and any eigenvalue of $A_u$ is unstable. Hence, we can always find a non-zero $\beta\in \mathbb{R}^n$ and compute
\begin{equation}\label{eqn:Sdef}
	S  = \Lambda + 1\beta^T,
\end{equation}
such that
\begin{enumerate}
	\item the characteristic polynomial of $A^u$ divides $\phi(s)$, where $\phi(s)$ is the characteristic polynomial of $S$, and $\phi(s) / \det(sI-A^u)$ has only strictly stable roots;
	\item $S$ do not share eigenvalues with $\Lambda$. Hence, by the virtue of Lemma~\ref{lmm:observable}, $(S^T,\beta)$ is controllable.
\end{enumerate}
\begin{remark}\label{rmk:eigenvalue}
Notice that by using $\beta$, we place the eigenvalues of $S$ to the locations which consist of two parts: the unstable ones that coincide with the eigenvalues of $A_u$ and the stable ones that are freely assigned but cannot be the eigenvalues of $\Lambda$. This is feasible as $(\Lambda,1)$ is controllable.
\end{remark} 

Next, let us consider the filter below:
\begin{equation}\label{eqn:xi_z}
  \begin{split}
    z_i(k) = y_i(k+1)-\beta^T\hat{\xi}_i(k),\\
    \hat{\xi}_i(k+1) = S \hat{\xi}_i(k)+\1_n z_i(k),
  \end{split}
\end{equation}
where $\beta$ and $S$ are calculated through \eqref{eqn:Sdef}. In the following lemma, we shall show that \eqref{eqn:xi_z} also losslessly decomposes the Kalman filter. Moreover, the covariance of $z_i(k)$ is bounded at any time.

\begin{lemma}\label{lmm:localeqv}
Consider the local filter \eqref{eqn:xi_z}. The following statements hold at any instant $k$:
\begin{enumerate}
\item \revise{\eqref{eqn:xi_z} has the same input-output relationship with \eqref{eqn:xi}. Namely, given the input $y_i(k)$, they yield the same output $\hat\xi_i(k)$;}
\item $z_i(k)$ is stable, \revise{i.e., the covariance of $z_i(k)$ is always bounded}.
\end{enumerate}
\end{lemma}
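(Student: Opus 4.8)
The plan is to prove the two statements separately; the first is a one-line substitution and the second carries the real content. For Statement 1, eliminate $z_i(k)$ by plugging the first line of \eqref{eqn:xi_z} into the second and use $S=\Lambda+\1_n\beta^T$ from \eqref{eqn:Sdef}: this gives $\hat\xi_i(k+1)=S\hat\xi_i(k)+\1_n\bigl(y_i(k+1)-\beta^T\hat\xi_i(k)\bigr)=\Lambda\hat\xi_i(k)+\1_n y_i(k+1)$, which is exactly \eqref{eqn:xi}. Hence, started from the same $\hat\xi_i(0)$ and driven by the same $y_i$, the two recursions produce identical trajectories $\hat\xi_i(\cdot)$, so they realize the same input--output map.

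For Statement 2 the obstacle is that $y_i(k)$, and therefore $\hat\xi_i(k)$, may have unbounded covariance because the unstable block $x^u(k)$ of the state grows; boundedness of $z_i(k)$ must come from an exact cancellation of these modes. I would first construct a matrix $T_i\in\mathbb R^{n\times n^u}$ enjoying the two properties $ST_i=T_iA^u$ and $\beta^TT_i=C_i^u$. To obtain it, recall from the construction after \eqref{eqn:Sdef} that $(S^T,\beta)$ is controllable (via Lemma~\ref{lmm:observable}) and that $\det(sI-A^u)$ divides the characteristic polynomial $\phi$ of $S$; by Cayley--Hamilton the latter forces $\phi\bigl((A^u)^T\bigr)=0$, hence $\phi\bigl((A^u)^T\bigr)(C_i^u)^T=0$. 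Applying Lemma~\ref{lmm:regular} with $X=S^T$, $p=\beta$, $Y=(A^u)^T$ and $q=(C_i^u)^T$ yields $U_i$ with $U_iS^T=(A^u)^TU_i$ and $U_i\beta=(C_i^u)^T$; setting $T_i\triangleq U_i^T$ gives the two claimed identities, which in turn imply the Sylvester relation $\Lambda T_i-T_iA^u=-\1_nC_i^u$ (uniquely solvable since $\Lambda$ is strictly stable and $A^u$ has all eigenvalues on or outside the unit circle).

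Next I would introduce the error $g_i(k)\triangleq\hat\xi_i(k)-T_ix^u(k+1)$ and show it is stable. Using Statement 1 to write $\hat\xi_i(k+1)=\Lambda\hat\xi_i(k)+\1_n y_i(k+1)$, together with $x^u(k+1)=A^ux^u(k)+w^u(k)$ ($w^u$ the unstable block of $w$) and $y_i(k)=C_i^ux^u(k)+C_i^sx^s(k)+v_i(k)$, a short computation in which the Sylvester relation annihilates the $x^u(k+1)$ term gives $g_i(k+1)=\Lambda g_i(k)+\1_nC_i^sx^s(k+1)+\1_nv_i(k+1)-T_iw^u(k+1)$. The forcing term has bounded covariance, because $x^s$ is the output of the strictly stable subsystem \eqref{eqn:xs} driven by white noise and $v_i,w^u$ are white with finite covariance, and $\Lambda$ is strictly stable; hence $\cov(g_i(k))$ is uniformly bounded in $k$. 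Finally, substituting $\hat\xi_i(k)=g_i(k)+T_ix^u(k+1)$ into the definition of $z_i$ gives $z_i(k)=y_i(k+1)-\beta^T\hat\xi_i(k)=(C_i^u-\beta^TT_i)x^u(k+1)+C_i^sx^s(k+1)+v_i(k+1)-\beta^Tg_i(k)$, and the first term vanishes because $\beta^TT_i=C_i^u$, leaving a sum of stable signals; therefore $z_i(k)$ is stable.

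The main obstacle is the simultaneous construction of $T_i$: it is precisely the eigenvalue-placement property of $\beta$ in \eqref{eqn:Sdef} — that $\det(sI-A^u)$ divides $\phi$ — which makes the hypothesis of Lemma~\ref{lmm:regular} hold, and it is the extra identity $\beta^TT_i=C_i^u$ produced by that lemma which upgrades boundedness of the auxiliary error $g_i$ to boundedness of the actual residual $z_i$ (with $\beta=0$ the second identity would read $C_i^u=0$, which fails in general). Everything else is routine linear-system bookkeeping.
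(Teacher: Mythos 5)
Your proposal is correct and follows essentially the same route as the paper's Appendix proof: part 1 is the identical one-line substitution, and part 2 uses Lemma~\ref{lmm:regular} (justified by the divisibility of $\phi$ by $\det(sI-A^u)$ and controllability of $(S^T,\beta)$) to build an intertwining matrix that cancels the unstable modes, then writes $z_i$ as a stable auxiliary error plus stable signals. The only difference is a cosmetic reindexing --- the paper's $G_i^u$ satisfies $\beta^TG_i^u=C_i^uA^u$ with error $G_ix(k)-\hat\xi_i(k)$, whereas your $T_i$ satisfies $\beta^TT_i=C_i^u$ with error $\hat\xi_i(k)-T_ix^u(k+1)$, i.e.\ $G_i^u=T_iA^u$.
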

\begin{proof}
The proof is given in Appendix-\ref{sec:app_localeqv}.
\end{proof}

\begin{remark}
If $A$ has unstable modes, the previous discussions show that \eqref{eqn:xi} can be seen as a linear system with stable system matrix $\Lambda$ but unstable input $y_i(k+1)$. As a contrast, \eqref{eqn:xi_z} has unstable system matrix $S$ but stable input $z_i(k)$. \revise{This formulation is essential to guarantee the stability of local estimators, as will be seen in the proof of Theorem~\ref{thm:general}.}
\end{remark}

\subsection{\revise{A reduced-order decomposition of Kalman filter when $n<m$}}\label{sec:reduction}

To simplify notations, we define the following aggregated matrices:
\begin{equation}\label{eqn:SL}
  \begin{split}
    &\tilde{S} \triangleq I_{m} \otimes S, \;\tilde{L}_{i} \triangleq e_{i} \otimes \1_n,\; \tilde{L}\triangleq[\tilde{L}_1,\cdots,\tilde{L}_m]=I_m\otimes \1_n,
  \end{split}
\end{equation}
where $I_{m}$ is an $m$-dimensional identity matrix and $e_{i}$ is the $i$th canonical basis vector in $\mathbb{R}^m$. We thus collect \eqref{eqn:localdecompose} and \eqref{eqn:xi_z} in matrix form as:
\begin{equation}\label{eqn:fullmodel}
  \begin{split}
    \begin{bmatrix}
      \hat \xi_1(k+1) \\
      \vdots\\
      \hat \xi_m(k+1)
    \end{bmatrix}&=
    \tilde{S}
    \begin{bmatrix}
      \hat \xi_1(k) \\
      \vdots\\
      \hat \xi_m(k)
    \end{bmatrix}+
    \tilde{L}
    \begin{bmatrix}
      z_1(k) \\
      \vdots\\
      z_m(k)
    \end{bmatrix},\\
    \hat{x}(k) &= F
    \begin{bmatrix}
      \hat \xi_1(k) \\
      \vdots\\
      \hat \xi_m(k)
    \end{bmatrix}.
  \end{split}
\end{equation}
where $F \triangleq\left[F_{1}, F_{2}, \cdots, F_{m}\right]$. By Lemmas~\ref{lmm:decompose} and \ref{lmm:localeqv}, \eqref{eqn:fullmodel} represents a lossless decomposition of Kalman filter.

Notice that the system order of \eqref{eqn:fullmodel} is $mn$. In this part, we shall show that by performing model reduction, this order can be further reduced to $n^2$ when the state dimension is less than the number of sensors, namely $n<m$. These discussions would be useful for us to achieve a low communication complexity in distributed frameworks. 

\revise{To proceed, we regard the input and output of \eqref{eqn:fullmodel} as $z(k)$ and $\hat{x}(k)$, respectively, where \begin{equation}\label{eqn:z}
	z(k)\triangleq \begin{bmatrix}
		z_1(k), \cdots,z_m(k)
	\end{bmatrix}^T.
\end{equation}
}Let us introduce the below lemma, the proof of which is given in Appendix-\ref{sec:app_matrixdecomp}:
\begin{lemma}\label{lmm:matrixdecompse}
	Any matrix $W \in \mathbb{R}^{n\times n}$ can be decomposed as 
	\begin{equation}\label{eqn:Wdecompse}
		W = H_1 \varphi_1(S) + H_2 \varphi_2(S) + \cdots + H_n \varphi_n(S),
	\end{equation}
	where $H_i \triangleq e_i \beta^T$, $\{\varphi_j(S)\}$ are certain polynomials of $S$, and $S$ and $\beta$ are given in \eqref{eqn:Sdef}.
\end{lemma}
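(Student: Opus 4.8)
The plan is to reduce the claim to a statement about a single controllable pair and then invoke a Cayley--Hamilton / Krylov-space argument row by row. Recall from \eqref{eqn:Sdef} that $S = \Lambda + \1_n\beta^T$ and that, by the construction in Section~\ref{sec:reduction}, the pair $(S^T,\beta)$ is controllable; equivalently $(\beta^T, S)$ is an observable pair, so the Krylov matrix $\mathcal{O} \triangleq \col(\beta^T, \beta^T S, \ldots, \beta^T S^{n-1})$ is nonsingular. I would first observe that $H_j\varphi_j(S) = e_j\beta^T\varphi_j(S) = e_j\bigl(\varphi_j(S)^T\beta\bigr)^T$, so the $j$-th summand in \eqref{eqn:Wdecompse} is the matrix whose $j$-th row equals $\bigl(\varphi_j(S)^T\beta\bigr)^T$ and whose other rows are zero. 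Hence the decomposition \eqref{eqn:Wdecompse} is possible for a given $W$ \emph{if and only if} each row $w_j^T$ of $W$ (for $j=1,\dots,n$) can be written as $w_j^T = \beta^T\varphi_j(S)$ for some polynomial $\varphi_j$.

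Thus the whole lemma collapses to the following single fact: for every row vector $w^T\in\mathbb{R}^{1\times n}$ there exists a polynomial $\varphi$ with $w^T = \beta^T\varphi(S)$. To see this, note that by Cayley--Hamilton every polynomial in $S$ is a linear combination of $I, S, \ldots, S^{n-1}$, so the set $\{\beta^T\varphi(S) : \varphi \text{ a polynomial}\}$ equals the row span of $\mathcal{O}$. Since $\mathcal{O}$ is nonsingular, this row span is all of $\mathbb{R}^{1\times n}$, so the required $\varphi$ exists for any $w^T$. Concretely, writing $w^T = c^T\mathcal{O}$ with $c = \col(c_0,\ldots,c_{n-1}) = \mathcal{O}^{-T}w$, one takes $\varphi(s) = \sum_{\ell=0}^{n-1} c_\ell s^\ell$. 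Applying this to each of the $n$ rows $w_j^T$ of $W$ yields polynomials $\varphi_1,\dots,\varphi_n$ for which \eqref{eqn:Wdecompse} holds with $H_j = e_j\beta^T$, completing the proof.

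The only genuinely nontrivial input is the nonsingularity of $\mathcal{O}$, i.e.\ the controllability of $(S^T,\beta)$; but this was already established when $S$ and $\beta$ were introduced in \eqref{eqn:Sdef} (item~2 there, via Lemma~\ref{lmm:observable}), so I would simply cite it rather than reprove it. I expect the main obstacle, if any, to be purely expository: stating the ``row-by-row'' reduction cleanly, since $H_j = e_j\beta^T$ is rank one and it is easy to get confused about whether one is acting on rows or columns. Writing $H_j\varphi_j(S) = e_j\,(\beta^T\varphi_j(S))$ and emphasizing that $\beta^T\varphi_j(S)$ is a row vector makes the bookkeeping transparent, and no lengthy computation is needed beyond the one linear solve $c = \mathcal{O}^{-T}w_j$ per row.
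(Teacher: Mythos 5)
Your proposal is correct and follows essentially the same route as the paper: the paper also decomposes $W=\sum_j e_j w_j^T$ row by row and reduces to the single-vector fact (its Lemma~\ref{lmm:vec_poly}) that $w^T=\beta^T\varphi(S)$ is solvable for a degree-$(n-1)$ polynomial because controllability of $(S^T,\beta)$ makes the Krylov matrix $[\beta,\,S^T\beta,\,\ldots,\,(S^{n-1})^T\beta]$ nonsingular. Your explicit formula $c=\mathcal{O}^{-T}w_j$ is just a slightly more concrete rendering of the same linear solve.
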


As a direct result of Lemma \ref{lmm:matrixdecompse}, for any $F_i$ in \eqref{eqn:localdecompose}, we can always rewrite it by using the polynomials of $S$, i.e., $\{p_{ij}(S)\}$:
\begin{equation}\label{eqn:Fdecompose}
	F_i= \sum_{j=1}^n H_j p_{ij}(S).
\end{equation}
For simplicity, we also denote
\begin{equation}\label{eqn:T_i}
	T_i \triangleq [(p_{i1}(S) \1_n)^T, \cdots, (p_{in}(S) \1_n)^T]^T.
\end{equation}

It is then proved in the below theorem that system \eqref{eqn:fullmodel} can be reduced with a less order:
\begin{theorem}\label{lmm:mdl_reduce}
  Consider the following system: 
  \begin{equation}\label{eqn:aft_mdl_reduce}
    \begin{split}
      \begin{bmatrix}
	\theta_1(k+1)\\\vdots\\\theta_n(k+1)
      \end{bmatrix}&=(I_n\otimes S)
      \begin{bmatrix}
	\theta_1(k)\\\vdots\\\theta_n(k)
      \end{bmatrix}+ T
      \begin{bmatrix}
	z_1(k)\\\vdots\\z_m(k)
      \end{bmatrix},\\
      \tilde{x}(k) &=\revise{H 
      \begin{bmatrix}
	\theta_1(k)\\\vdots\\\theta_n(k)
      \end{bmatrix},}
    \end{split}
  \end{equation}
  where
  \begin{equation}\label{eqn:H}
    T =[T_1,T_2,\cdots,T_m], \; H= [H_1,H_2,\cdots,H_n].
  \end{equation}
  It holds that system \eqref{eqn:aft_mdl_reduce} shares the same transfer function with \eqref{eqn:fullmodel}.
\end{theorem}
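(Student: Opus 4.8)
The plan is to prove the claim by writing down the transfer functions of the two realizations in closed form and showing that they coincide as $n\times m$ rational matrix functions of $s$. Regarding $z(k)$ in \eqref{eqn:z} as the input and $\hat x(k)$ (resp. $\tilde x(k)$) as the output, let $G(s) = F(sI_{mn}-\tilde S)^{-1}\tilde L$ be the transfer function of \eqref{eqn:fullmodel} and $G_r(s) = H\big(sI_{n^2}-(I_n\otimes S)\big)^{-1}T$ that of \eqref{eqn:aft_mdl_reduce}. Since both are $n\times m$, it suffices to compare them column by column, i.e. to show that the map from each scalar input $z_i$ to the output is the same in the two systems for every $i=1,\dots,m$.

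For the full model I would use the Kronecker identities $(sI_{mn}-I_m\otimes S)^{-1} = I_m\otimes(sI_n-S)^{-1}$ and $\tilde L e_i = \tilde L_i = e_i\otimes\1_n$, together with $(A\otimes B)(C\otimes D)=AC\otimes BD$ and $F=[F_1,\dots,F_m]$, to collapse the $i$-th column of $G(s)$ to
\[
	G(s)e_i = F(e_i\otimes(sI_n-S)^{-1}\1_n) = F_i (sI_n-S)^{-1}\1_n .
\]
Applying the same manipulations to the reduced model, with $I_n\otimes S$ in place of $\tilde S$, $T=[T_1,\dots,T_m]$ where $T_i = \col\big(p_{i1}(S)\1_n,\dots,p_{in}(S)\1_n\big)$ as in \eqref{eqn:T_i}, and $H=[H_1,\dots,H_n]$, gives
\[
	G_r(s)e_i = \sum_{j=1}^{n} H_j (sI_n-S)^{-1} p_{ij}(S)\1_n .
\]
Now the key step: each $p_{ij}(S)$ is a polynomial in $S$, hence it commutes with the resolvent $(sI_n-S)^{-1}$ (both lie in the commutative algebra generated by $S$), so $H_j(sI_n-S)^{-1}p_{ij}(S) = H_j p_{ij}(S)(sI_n-S)^{-1}$. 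Summing over $j$ and invoking the decomposition \eqref{eqn:Fdecompose}, namely $F_i = \sum_{j=1}^n H_j p_{ij}(S)$, yields $G_r(s)e_i = F_i(sI_n-S)^{-1}\1_n = G(s)e_i$. Since this holds for every $i$, the two transfer functions agree.

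I do not expect a serious obstacle; the argument is essentially bookkeeping with Kronecker products plus the commutativity of polynomials of $S$ with its resolvent. The one point requiring care is conceptual rather than technical: in the reduced realization the polynomial factors $p_{ij}(S)$ are ``hidden'' inside the input matrix $T$ while the output map is the fixed $H=[e_1\beta^T,\dots,e_n\beta^T]$, and one must check that these factors can be moved past the resolvent and recombined — via Lemma~\ref{lmm:matrixdecompse} and \eqref{eqn:Fdecompose} — precisely back into the original weights $F_i$; this is exactly why compressing the $m$ output blocks of \eqref{eqn:fullmodel} down to $n$ blocks is possible. It is also worth noting that the entire computation is purely algebraic, so it remains valid even though $S$ (and hence $\tilde S$ and $I_n\otimes S$) may be unstable, and that the reduced realization has order $n^2$ against $mn$, a genuine reduction whenever $n<m$.
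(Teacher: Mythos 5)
Your proof is correct and is essentially the paper's argument in frequency-domain form: the paper compares the two impulse responses term by term, writing $F_iS^t\1_n=\sum_j H_jS^tp_{ij}(S)\1_n$ and re-summing, while you compare resolvents via $F_i(sI_n-S)^{-1}\1_n=\sum_j H_j(sI_n-S)^{-1}p_{ij}(S)\1_n$; both hinge on exactly the same two ingredients, the Kronecker bookkeeping and the commutativity of $p_{ij}(S)$ with functions of $S$, combined with the decomposition \eqref{eqn:Fdecompose}.
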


\begin{proof}
The proof is presented in Appendix-\ref{sec:app_mdl_reduce}.
\end{proof}

\revise{Therefore, by performing model reduction, we present system \eqref{eqn:aft_mdl_reduce} which shares the same transfer function with \eqref{eqn:fullmodel} but with a reduced order. As proved previously, the output of \eqref{eqn:fullmodel} is the optimal Kalman estimate. As a result, \eqref{eqn:aft_mdl_reduce} also has the Kalman estimate as its output and the Kalman filter can be perfectly recovered by \eqref{eqn:aft_mdl_reduce} as well.}
We hereby refer both \eqref{eqn:fullmodel} and \eqref{eqn:aft_mdl_reduce} to lossless decomposition of Kalman fiter. Depending on the size of $m$ and $n$, one should use a system with smaller dimension to represent the centralized Kalman filter.  

\section{Local Implementation of Kalman filter}\label{sec:algorithm}
From Fig.~\ref{fig:infoflow}, it is clear that local decomposition proposed in Section~\ref{sec:decompose} is still centralized as a fusion center is required for calculating the weighted sum. In this section, we shall provide distributed algorithms for implementing it, where each sensor node performs local filtering by using the results from Section \ref{sec:decompose}, and global fusion by exchanging information with neighbors and running consensus algorithm. Based on whether $n$ is greater than $m$ or not, different algorithms will be presented to achieve a low communication complexity.

We use a weighted undirected graph  $\mathcal{G}=\{\mathcal{V},\mathcal{E},\mathcal{A}\}$ to model the interaction among nodes, where $\mathcal{V} =\{1,2,...,m\}$ is the set of sensors, $\mathcal{E}\subset \mathcal{V}\times\mathcal{V}$ is the set of edges, and $\mathcal{A}=\left[a_{i j}\right]$ is the weighted adjacency matrix. It is assumed $a_{ij}\geq 0$ and $a_{ij}=a_{ji},\forall i,j \in \mathcal{V}$. An edge between sensors $i$ and $j$ is denoted by $e_{ij}\in \mathcal{E}$, indicating that these two agents can communicate directly with each other. Note that $e_{ij}\in \mathcal{E}$ if and only $a_{ij}>0$. By denoting the degree matrix as $\mathcal{D} \triangleq \diag\left(\operatorname{deg}_{1}, \ldots, \operatorname{deg}_{N}\right)$ with $\mathrm{deg}_{i}=\sum_{j=1}^{N} a_{ij},$ the Laplacian matrix of $\mathcal{G}$ is defined as $\mathcal{L}_{\mathcal{G}}\triangleq\mathcal{D}-\mathcal{A}$. In this paper, a connected network is considered. We therefore can arrange the eigenvalues of Laplacian matrix as $0=\mu_1< \mu_2 \leq \cdots \leq \mu_m.$


\subsection{Description of the distributed estimator}\label{sec:analysis}
\revise{In light of \eqref{eqn:localdecompose}, the optimal estimate fuses $\hat{\xi}_i(k)$ from all sensors. However, in a distributed framework, each sensor can only access the information in its neighborhood. Hence, any sensor $i$ needs to, through the communication over network, infer $\hat{\xi}_j(k)$ for all $j\in\mathcal{V}$ to achieve a stable local estimate. }
	
\revise{Let us denote by $\eta_{i, j}(k)$ as the inference from sensor $i$ on sensor $j$.
As will be proved later in this section, $\eta_{i,j}(k)$, by running a synchronization algorithm, can track $\frac{1}{m}\hat{\xi}_j(k)$ with bounded error. Hence, every sensor $i$ can make a decent inference on $\hat{\xi}_j(k)$.} 
	
\revise{By collecting its inference on all sensors together, each sensor $i$ keeps a local state as below:
	\begin{equation}
		\eta_i(k) \triangleq
		{\left[\begin{array}{c}
				\eta_{i, 1}(k) \\
				\vdots \\
				\eta_{i, m}(k)
			\end{array}\right]\in\mathbb{R}^{mn},} 
	\end{equation}
 which will be updated by synchronization algorithms. Since $\eta_i(k)$ contains the fair inference on all $\hat{\xi}_j(k), j\in\mathcal{V}$, sensor $i$ finally uses it to compute a stable local estimate.}

To be concrete, let us define the message sent by agent $i$ at time $k$ as $\Delta_i(k)\triangleq\tilde{\Gamma}\eta_i(k)\in\mathbb{R}^m$, where $\tilde{\Gamma}=I_m\otimes\Gamma$ and $\Gamma$ is a design parameter to be given later. We are now ready to present the main algorithm. Suppose each node $i$ is initialized with $\hat{x}_i(0)=0$ and $\eta_i(0)=0$.  At any instant $k>0$, its update is outlined in Algorithm \ref{alg:unobservable}, the information flow of which is shown in Fig. \ref{fig:blkdiag}. Compared with Fig.~\ref{fig:infoflow}, the proposed algorithm is achieved in a distributed manner.

\revise{\begin{remark}
Instead of transmitting the raw estimate $\eta_{i}(k)\in\mathbb{R}^{mn}$, each agent sends a ``coded" vector $\Delta_i(k)$, with a smaller size $m$.
\end{remark}
}

\begin{algorithm}
  1:\: Using the latest measurement from itself, sensor $i$ computes the local residual and update the local estimate by
  \begin{equation}\label{eqn:step1}
  \begin{split}
  z_i(k) = y_i(k+1)-\beta^T\hat{\xi}_i(k),\\
  \hat{\xi}_i(k+1) = S \hat{\xi}_i(k)+\1_n z_i(k).
  \end{split}
  \end{equation}
  2:\: Compute $\Delta_i(k)=\tilde{\Gamma}\eta_i(k)$ and collect $\Delta_j(k)$ from neighbors and fuse the neighboring information with the consensus algorithm as
  \begin{equation}\label{eqn:eta}
    \eta_i(k+1) =\tilde{S}\eta_i(k) +\tilde{L}_iz_i(k)+\tilde{B}\sum_{j=1}^m a_{ij}(\Delta_j(k)-\Delta_i(k)),
  \end{equation}
  where $\tilde{S}$ and $\tilde{L}_i$ are given in \eqref{eqn:SL}, and $\tilde{B} \triangleq I_m \otimes \1_n$.\\
  3:\: Update the fused estimate on system state as:
  \begin{equation}\label{eqn:localfuse}
    \breve{x}_i(k+1)=mF\eta_i(k+1).
  \end{equation}
  4:\: Transmit the new state $\Delta_i(k+1)$ to neighbors.
  \caption{Distributed estimation algorithm for sensor $i$}
  \label{alg:unobservable}
\end{algorithm}

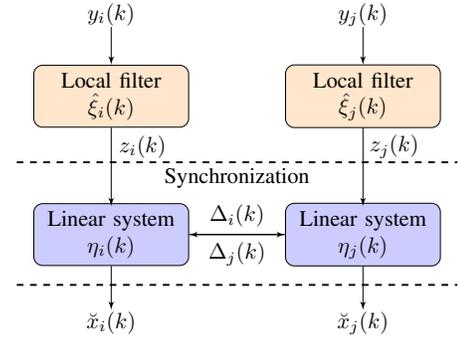
\begin{figure}
  \centering
  \resizebox{0.33\textwidth}{!}{\begin{tikzpicture}[auto, node distance=1.8cm,>=latex']

\node at (0,3.5) {$y_i(k)$};
\node at (4,3.5) {$y_j(k)$};

\draw [->] (0,3.3) -- (0,2.7);
\draw [->] (4,3.3) -- (4,2.7);

\node [sensor,align=center] (est1) {Linear system\\$\eta_{i}(k)$};
\node [sensor, right of=est1,node distance=4cm,align=center] (est2) {Linear system\\$\eta_{j}(k)$};
\node [est,above of=est1,align=center,node distance=2.2cm] (sensor1) {Local filter\\$\hat{\xi}_i(k)$};
\node [est,above of=est2,align=center,node distance=2.2cm] (sensor2) {Local filter\\$\hat{\xi}_j(k)$};

\draw [->] (0,1.7) --  (0,0.45);
\draw [->] (4,1.7) -- (4,0.45);

\node at (0.5,1.4) {$z_i(k)$};
\node at (4.5,1.4) {$z_j(k)$};

\draw [->] (est1) -- node {$\Delta_i(k)$} (est2);
\draw [->] (est2) -- node {$\Delta_j(k)$} (est1);

\draw[dashed, line width=0.3mm] (-1.5, 1.15) -- (5.5, 1.15);
\draw[dashed, line width=0.3mm] (-1.5, -0.8) -- (5.5, -0.8);

\node at (2,0.9) {Synchronization};

\draw [->] (est1) --  (0,-1.2);
\draw [->] (est2) -- (4,-1.2);

\node at (0,-1.4) {$\breve{x}_i(k)$};
\node at (4,-1.4) {$\breve{x}_j(k)$};
\end{tikzpicture}}
  \caption{The information flow of Algorithm \ref{alg:unobservable}, where nodes $i$ and $j$ are immediate neighbors.}
  \label{fig:blkdiag}
\end{figure}

\subsection{Performance analysis}\label{sec:performance}
This part is devoted to the performance analysis of Algorithm \ref{alg:unobservable}. We shall first provide the following theorem: 
\begin{theorem}\label{thm:optimal}
	With Algorithm \ref{alg:unobservable}, the average of fused estimates from all sensors coincides with the optimal Kalman estimate at any instant $k$. That is,
	\begin{equation}\label{eqn:ave}
	\frac{1}{m}\sum_{i=1}^m \breve{x}_i(k)=\hat{x}(k), \forall k\geq 0.
	\end{equation}
\end{theorem}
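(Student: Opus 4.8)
The plan is to show that the quantity $\bar{\eta}(k) \triangleq \frac{1}{m}\sum_{i=1}^m \eta_i(k)$ evolves exactly like the stacked local-filter state $\frac{1}{m}\col(\hat\xi_1(k),\dots,\hat\xi_m(k))$, so that averaging the fused estimates $\breve x_i(k) = mF\eta_i(k)$ recovers $F\col(\hat\xi_1(k),\dots,\hat\xi_m(k)) = \hat x(k)$ by Lemma~\ref{lmm:decompose}. The key observation is that the consensus correction term in \eqref{eqn:eta} cancels upon summation: since the graph is undirected, $a_{ij} = a_{ji}$, and therefore $\sum_{i=1}^m \sum_{j=1}^m a_{ij}\bigl(\Delta_j(k)-\Delta_i(k)\bigr) = 0$, regardless of the value of $\Gamma$ and the $\Delta_i(k)$'s.

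\emph{First I would} sum \eqref{eqn:eta} over $i=1,\dots,m$ and divide by $m$. The term $\tilde S\eta_i(k)$ sums to $m\tilde S\bar\eta(k)$; the consensus term vanishes by the antisymmetry argument above; and the input term gives $\frac{1}{m}\sum_{i=1}^m \tilde L_i z_i(k)$. Recalling $\tilde L_i = e_i\otimes\1_n$ from \eqref{eqn:SL}, the vector $\sum_{i=1}^m \tilde L_i z_i(k) = \col(\1_n z_1(k),\dots,\1_n z_m(k)) = \tilde L\,z(k)$, which is precisely the input appearing in the stacked local filter \eqref{eqn:fullmodel}. Hence $\bar\eta(k+1) = \tilde S\bar\eta(k) + \frac{1}{m}\tilde L z(k)$.

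\emph{Next I would} compare this recursion with that of $\zeta(k) \triangleq \frac{1}{m}\col(\hat\xi_1(k),\dots,\hat\xi_m(k))$, which from \eqref{eqn:fullmodel} (equivalently from stacking \eqref{eqn:xi_z}) satisfies $\zeta(k+1) = \tilde S\zeta(k) + \frac{1}{m}\tilde L z(k)$ — the same linear recursion. Since both $\eta_i(0)=0$ and $\hat\xi_i(0)=0$ by the initialization, we have $\bar\eta(0) = \zeta(0) = 0$, so by induction $\bar\eta(k) = \zeta(k)$ for all $k\geq 0$. It then follows that
\begin{equation}
\frac{1}{m}\sum_{i=1}^m \breve x_i(k) = \frac{1}{m}\sum_{i=1}^m mF\eta_i(k) = mF\bar\eta(k) = mF\zeta(k) = F\col(\hat\xi_1(k),\dots,\hat\xi_m(k)) = \hat x(k),
\end{equation}
where the last equality is Lemma~\ref{lmm:decompose}. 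One small point needs care: \eqref{eqn:localfuse} and \eqref{eqn:eta} both produce index-$(k{+}1)$ quantities, while the local filter $\hat\xi_i$ is driven by $z_i(k) = y_i(k{+}1) - \beta^T\hat\xi_i(k)$; I should check that the $z_i$ fed into \eqref{eqn:eta} is exactly the $z_i$ produced by step~1 of the same iteration, which it is by construction of Algorithm~\ref{alg:unobservable}.

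\emph{The main obstacle} is essentially bookkeeping rather than deep mathematics: the cancellation of the consensus term is immediate from undirectedness, and the induction is routine. The one subtlety worth stating explicitly is that the averaging argument works \emph{independently of the consensus gain} $\Gamma$ and of whether the individual $\eta_i(k)$ track $\frac1m\hat\xi_j(k)$ — optimality of the \emph{average} is structural and holds even before any stability/convergence analysis of the individual nodes (which is deferred to the later theorems). I would make sure the proof flags this, since it is the conceptual content of the statement.
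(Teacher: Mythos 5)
Your proof is correct and follows essentially the same route as the paper: sum \eqref{eqn:eta} over $i$, cancel the consensus term via $a_{ij}=a_{ji}$, identify the resulting recursion with the stacked local filter \eqref{eqn:xi_z}/\eqref{eqn:fullmodel} (the paper states this as $\hat{\xi}_j(k)=\sum_i\eta_{i,j}(k)$, which is your $\bar\eta(k)=\zeta(k)$ made componentwise), and conclude via Lemma~\ref{lmm:decompose}. Your explicit handling of the zero initial conditions and the observation that the result is independent of $\Gamma$ are points the paper leaves implicit, but the argument is the same.
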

\begin{proof}
	Summing \eqref{eqn:eta} over all $i=1,2,...,m$ yields
	\begin{equation}\label{eqn:etasum}
	\sum_{i=1}^m \eta_i(k+1) =\tilde{S}\sum_{i=1}^m\eta_i(k) +\sum_{i=1}^m\tilde{L}_i z_i(k), 
	\end{equation}
	where we use the fact that $a_{ij}=a_{ji}$ for any $i,j\in\mathcal{V}$. Comparing it with \eqref{eqn:xi_z}, it holds for any instant $k$ and any $j\in\mathcal{V}$ that:
	\begin{equation}\label{eqn:xvseta}
		\hat{\xi}_j(k) = \sum_{i=1}^m \eta_{i, j}(k).
	\end{equation}
	Therefore, the following equation is satisfied at any $k\geq 0$:
	\begin{equation}
	\begin{split}
	\frac{1}{m}\sum_{i=1}^m \breve{x}_i(k)&=\sum_{i=1}^m F\eta_i(k)=\sum_{i=1}^m\sum_{j=1}^m F_j\eta_{i, j}(k)\\&=\sum_{j=1}^m F_j \Big[\sum_{i=1}^m \eta_{i, j}(k)\Big]=\sum_{j=1}^m F_j\hat{\xi}_j(k)=\hat{x}(k).
	\end{split}
	\end{equation}
	This completes the proof.
\end{proof}

On the other hand, in order to show the stability of proposed estimator, it is also desired to prove the boundedness of error covariance. Towards this end, we introduce the following lemma, the condition of which is characterized in terms of a certain relation between the Mahler measure (the absolute product of unstable eigenvalues of $S$) and the graph condition number (the ratio of the maximum and minimum nonzero eigenvalues of the Laplacian matrix):

\begin{lemma}\label{lmm:poleplacement}
  Suppose that the product of all unstable eigenvalues of matrix $S$ meets the following condition:
  \begin{equation}\label{eqn:unstable}
    \prod_j |\lambda_j^u(S)| < \frac{1+\mu_2/\mu_m}{1-\mu_2/\mu_m},
  \end{equation}
  where $\lambda_j^u(S)$ represents the $j$th unstable eigenvalue of $S$.
  Let 
  \begin{equation}\label{eqn:Gamma}
    \Gamma=\frac{2}{\mu_2+\mu_m}\frac{\1_n^T\mathcal{P}S}{\1_n^T\mathcal{P}\1_n}\revise{\in\mathbb{R}^{1\times n}},
  \end{equation}
  where \revise{$\mu_2$
  	and $\mu_m$ are, respectively, the second smallest and largest
  	eigenvalues of  $\mathcal L_\mathcal{G}$.} Moreover, $\mathcal{P}>0$ is the solution to the following modified algebraic Riccati inequality
  \begin{equation}\label{eqn:riccati}
  	\mathcal{P}-S^{T} \mathcal{P} S+\left(1-\zeta^{2}\right) \frac{S^{T} \mathcal{P} \1_n \1_n^{T} \mathcal{P} S}{\1_n^{T} \mathcal{P} \1_n}>0,
  \end{equation}
  with $\zeta$ satisfying $\prod_{j}\left|\lambda_{j}^{u}(S)\right|<\zeta^{-1} \leq\frac{1+\mu_{2} / \mu_{m}}{1-\mu_{2} / \mu_{m}}.$ Then for any $j\in\{2,...,n\}$, it holds that 
  \begin{equation}\label{eqn:Seig}
  \rho(S-\mu_j \1_n \Gamma)<1.
  \end{equation}
\end{lemma}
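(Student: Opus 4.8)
The plan is to reduce the spectral-radius claim to a statement about a single modified algebraic Riccati inequality, then invoke the given hypotheses. The key observation is that $\Gamma$ has been chosen in \eqref{eqn:Gamma} to be a scaled version of $\1_n^T\mathcal{P}S/(\1_n^T\mathcal{P}\1_n)$, precisely the gain that arises in the standard "intermittent Kalman filter / mean-square stabilization over a lossy channel" argument. So I would first write $\mu_j\1_n\Gamma = \frac{2\mu_j}{\mu_2+\mu_m}\cdot\frac{\1_n\1_n^T\mathcal{P}S}{\1_n^T\mathcal{P}\1_n}$, and set $\gamma_j \triangleq \frac{2\mu_j}{\mu_2+\mu_m}$, so that $S-\mu_j\1_n\Gamma = S - \gamma_j\,\frac{\1_n\1_n^T\mathcal{P}S}{\1_n^T\mathcal{P}\1_n}$. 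Note that for $j\in\{2,\dots,n\}$ we have $\mu_2\le\mu_j\le\mu_m$, hence $\gamma_j\in\left[\frac{2\mu_2}{\mu_2+\mu_m},\frac{2\mu_m}{\mu_2+\mu_m}\right]$, an interval symmetric about $1$ with half-width $\frac{\mu_m-\mu_2}{\mu_m+\mu_2}$; equivalently $|\gamma_j-1|\le \frac{1-\mu_2/\mu_m}{1+\mu_2/\mu_m}$.

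**Next I would** prove a Lyapunov-type implication: if $\mathcal{P}>0$ satisfies \eqref{eqn:riccati} with the stated $\zeta$, then for every scalar $\gamma$ with $|\gamma-1|\le 1-\zeta^2$... more precisely with $(1-\gamma)^2 \le 1-\zeta^2$, the matrix $M_\gamma \triangleq S-\gamma\,\frac{\1_n\1_n^T\mathcal{P}S}{\1_n^T\mathcal{P}\1_n}$ satisfies $\mathcal{P}-M_\gamma^T\mathcal{P}M_\gamma>0$, which by the discrete Lyapunov inequality forces $\rho(M_\gamma)<1$. The computation is the routine expansion
\begin{equation}
M_\gamma^T\mathcal{P}M_\gamma = S^T\mathcal{P}S - (2\gamma-\gamma^2)\frac{S^T\mathcal{P}\1_n\1_n^T\mathcal{P}S}{\1_n^T\mathcal{P}\1_n} = S^T\mathcal{P}S - (1-(1-\gamma)^2)\frac{S^T\mathcal{P}\1_n\1_n^T\mathcal{P}S}{\1_n^T\mathcal{P}\1_n},
\end{equation}
so that $\mathcal{P}-M_\gamma^T\mathcal{P}M_\gamma = \mathcal{P}-S^T\mathcal{P}S + (1-(1-\gamma)^2)\frac{S^T\mathcal{P}\1_n\1_n^T\mathcal{P}S}{\1_n^T\mathcal{P}\1_n}$; since $(1-\gamma)^2\le 1-\zeta^2$ the last term dominates the $(1-\zeta^2)$-term of \eqref{eqn:riccati} (both coefficients multiply the same rank-one PSD matrix), so the whole expression is $\ge$ the left side of \eqref{eqn:riccati}, hence positive definite.

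**Then I would** close the loop: for $j\in\{2,\dots,n\}$ we showed $|\gamma_j-1|\le\frac{1-\mu_2/\mu_m}{1+\mu_2/\mu_m}$, so $(1-\gamma_j)^2\le\left(\frac{1-\mu_2/\mu_m}{1+\mu_2/\mu_m}\right)^2$. It remains to check that this is $\le 1-\zeta^2$, i.e. $\zeta^2 \le 1-\left(\frac{1-\mu_2/\mu_m}{1+\mu_2/\mu_m}\right)^2$. The hypothesis gives $\zeta^{-1}\le\frac{1+\mu_2/\mu_m}{1-\mu_2/\mu_m}$, so $\zeta\ge\frac{1-\mu_2/\mu_m}{1+\mu_2/\mu_m}=:r$, which at first glance points the wrong way; the correct bound must instead come from the other side $\prod_j|\lambda_j^u(S)|<\zeta^{-1}$ together with the feasibility of \eqref{eqn:riccati}. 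In fact the standard result (e.g. the intermittent-observation literature) is that \eqref{eqn:riccati} admits a positive-definite solution \emph{iff} $\prod_j|\lambda_j^u(S)|<\zeta^{-1}$, and one should read the chain $\prod_j|\lambda_j^u(S)|<\zeta^{-1}\le\frac{1+\mu_2/\mu_m}{1-\mu_2/\mu_m}$ as: pick any admissible $\zeta$; then $\zeta^{-1}\le r^{-1}$ gives $1-\zeta^2\ge 1-r^2$ is false — rather $\zeta\ge r$ is what we get, so I must be more careful and instead directly relate $1-\zeta^2$ to $(1-\gamma_j)^2$ using $1-\zeta^2 \ge (1-r)(1+r)\cdot$(something). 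I would resolve this by working with the substitution from the cited synchronization paper \cite{you2011network}: the precise inequality is $(1-\gamma_j)^2\le 1-\zeta^2$ holds because $\gamma_j\in[1-r,1+r]$ and $\zeta^{-1}$ was chosen with $\zeta^{-1}\le (1+r)/(1-r)$... **the main obstacle** is exactly pinning down this scalar inequality and the direction of the $\zeta$-constraint; everything else (the Lyapunov expansion, the rank-one algebra, the reduction $\rho<1$) is routine. I expect the resolution to hinge on noting $(1-\gamma_j)^2\le r^2$ and that admissibility of $\zeta$ in \eqref{eqn:riccati} combined with $\zeta^{-1}\le(1+r)/(1-r)$ yields, after rearrangement, $1-\zeta^2 \ge \frac{(1+r)^2-(1-r)^2\cdot(\text{adjustment})}{\cdots}\ge r^2$, i.e. the Riccati feasibility threshold is exactly what makes $r^2\le 1-\zeta^2$ attainable. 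Once that scalar fact is in hand, the theorem follows immediately from the Lyapunov implication applied to each $\gamma_j$.
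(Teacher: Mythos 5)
Your overall strategy is exactly the paper's: expand $(S-\mu_j\1_n\Gamma)^T\mathcal{P}(S-\mu_j\1_n\Gamma)$ using the rank-one structure of the gain, reduce everything to a scalar comparison against the coefficient $1-\zeta^2$ in \eqref{eqn:riccati}, and invoke the discrete Lyapunov inequality. The Lyapunov expansion you give is correct: with $\gamma_j=\tfrac{2\mu_j}{\mu_2+\mu_m}$ one gets $M_{\gamma_j}^T\mathcal{P}M_{\gamma_j}=S^T\mathcal{P}S-\bigl(1-(1-\gamma_j)^2\bigr)\tfrac{S^T\mathcal{P}\1_n\1_n^T\mathcal{P}S}{\1_n^T\mathcal{P}\1_n}$, and $|1-\gamma_j|\le r\triangleq\tfrac{1-\mu_2/\mu_m}{1+\mu_2/\mu_m}$. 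However, there is a genuine gap in the step where you state the domination condition. For $\mathcal{P}-M_{\gamma_j}^T\mathcal{P}M_{\gamma_j}$ to dominate the left-hand side of \eqref{eqn:riccati}, you need the coefficient inequality $1-(1-\gamma_j)^2\ge 1-\zeta^2$, i.e.\ $(1-\gamma_j)^2\le \zeta^2$. You instead wrote $(1-\gamma_j)^2\le 1-\zeta^2$, which is a different (and generally false) requirement: it would force $\zeta\le\sqrt{1-r^2}$, whereas the hypothesis only gives $\zeta\ge r$. This mis-statement is precisely why the final paragraph of your argument stalls — you correctly observe that $\zeta\ge r$ ``points the wrong way'' for the condition you wrote down, and you never close the loop.

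With the condition stated correctly, the loop closes in one line and the direction of the $\zeta$-constraint is exactly right: the hypothesis $\zeta^{-1}\le\tfrac{1+\mu_2/\mu_m}{1-\mu_2/\mu_m}=r^{-1}$ gives $\zeta\ge r\ge|1-\gamma_j|$, hence $(1-\gamma_j)^2\le\zeta^2$ and $1-(1-\gamma_j)^2\ge 1-\zeta^2$, so $\mathcal{P}-M_{\gamma_j}^T\mathcal{P}M_{\gamma_j}\ge \mathcal{P}-S^T\mathcal{P}S+(1-\zeta^2)\tfrac{S^T\mathcal{P}\1_n\1_n^T\mathcal{P}S}{\1_n^T\mathcal{P}\1_n}>0$, which yields \eqref{eqn:Seig}. (The other half of the $\zeta$-constraint, $\prod_j|\lambda_j^u(S)|<\zeta^{-1}$, is used only to guarantee that \eqref{eqn:riccati} admits a solution $\mathcal{P}>0$, via controllability of $(S,\1_n)$, as you correctly surmised.) This is exactly the paper's proof, where $\zeta_j\triangleq 1-\gamma_j$ and the only fact used is $\zeta_j^2\le\zeta^2$. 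So your approach is the right one, but as written the proof does not go through: the scalar inequality it hinges on is mis-stated, and the argument explicitly ends without resolving it.
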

\begin{proof}
For any $j\in\{2,...,n\}$, let us denote $\zeta_j = 1-2\mu_j/(\mu_2+\mu_m) \leq \zeta$. Since $(S,\1_n)$ is controllable, there exists some $\mathcal{P}>0$ which solves \eqref{eqn:riccati}. Together with \eqref{eqn:Gamma}, it holds that
\begin{equation}
	\begin{split}
		&(S-\mu_j \1_n \Gamma)^T \mathcal{P} (S-\mu_j \1_n \Gamma)-\mathcal{P}\\
		= &S^T\mathcal{P}S-(1-\zeta_j^2)\frac{S^{T} \mathcal{P} \1_n \1_n^{T} \mathcal{P} S}{\1_n^{T} \mathcal{P} \1_n}-\mathcal{P}\\
		\leq & S^T\mathcal{P}S-(1-\zeta^2)\frac{S^{T} \mathcal{P} \1_n \1_n^{T} \mathcal{P} S}{\1_n^{T} \mathcal{P} \1_n}-\mathcal{P}<0.
	\end{split}
\end{equation}
Hence, our proof completes.
\end{proof}

\begin{remark}
Note that, if all the eigenvalues of $S$ lie on or outside the unit circle, You \text{et al.} \cite{you2011network} prove that \eqref{eqn:Seig} holds if and only if \revise{\eqref{eqn:unstable}} is satisfied. In Lemma~\ref{lmm:poleplacement}, we further show that, \revise{\eqref{eqn:unstable}} is still a sufficient condition to facilitate \eqref{eqn:Seig} if $S$ has stable modes.
\end{remark}

\begin{remark}
Invoking Remark \ref{rmk:eigenvalue}, each $\lambda_j^u(S)$ corresponds to a root of the characteristic polynomial of $A^u$. Thus, the condition \revise{\eqref{eqn:unstable}} can be rewritten using the system matrix $A^u$,  
  \begin{equation}
    \prod_j |r_j(A^u)| < \frac{1+\mu_2/\mu_m}{1-\mu_2/\mu_m},
  \end{equation}
where $r_j(A^u$ is a root of the characteristic polynomial of $A^u$.
\end{remark}
With the above preparations, we are now ready to analyze the error covariance of local estimator as below:

\begin{theorem}\label{thm:observable}
Suppose that the Mahler measure of $S$ meets condition \revise{\eqref{eqn:unstable}}, and $\Gamma$ is designed based on \eqref{eqn:Gamma}--\eqref{eqn:riccati}.
With Algorithm \ref{alg:unobservable}, the error covariance of each local estimate $\breve{x}_i(k)$ is bounded at any instant $k$. 
\end{theorem}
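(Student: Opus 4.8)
The plan is to split the local error $\breve{x}_i(k)-x(k)$ into the centralized steady-state Kalman error, whose covariance is already known to be bounded, plus a ``disagreement'' term that records how far each sensor's inference $\eta_{i,j}(k)$ is from its network average; the disagreement will then be bounded by reducing its dynamics, via the spectral decomposition of the Laplacian, to a finite family of Schur-stable recursions forced by the stable residuals $z_j(k)$.

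First I would reduce the claim to the boundedness of the disagreement. Since Algorithm~\ref{alg:unobservable} is initialized at zero and all noises are zero-mean, every signal below is zero-mean, so ``covariance'' equals the second-moment matrix. From \eqref{eqn:xvseta} (obtained inside the proof of Theorem~\ref{thm:optimal}) we have $\hat\xi_j(k)=\sum_{i}\eta_{i,j}(k)$, hence the average $\bar\eta_j(k)\triangleq\frac1m\sum_i\eta_{i,j}(k)$ equals $\frac1m\hat\xi_j(k)$. Writing $\delta_{i,j}(k)\triangleq\eta_{i,j}(k)-\bar\eta_j(k)$ and using \eqref{eqn:localfuse} and \eqref{eqn:localdecompose},
\[
\breve{x}_i(k)-\hat{x}(k)=m\sum_{j=1}^{m}F_j\,\delta_{i,j}(k).
\]
As $\hat{x}(k)-x(k)$ has covariance $P$, it then suffices to show that $\cov(\delta_{i,j}(k))$ is bounded uniformly in $k$ for all $i,j$.

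Next I would derive the disagreement dynamics. Subtracting the per-block average of \eqref{eqn:eta} from \eqref{eqn:eta} itself, using $a_{ij}=a_{ji}$ and the fact that the $j$-th block of $\Delta_l(k)=\tilde{\Gamma}\eta_l(k)$ is the scalar $\Gamma\eta_{l,j}(k)$, and stacking $\delta_j(k)\triangleq\col(\delta_{1,j}(k),\dots,\delta_{m,j}(k))$, one is led to
\[
\delta_j(k+1)=\big[(I_m\otimes S)-(\mathcal{L}_{\mathcal{G}}\otimes\1_n\Gamma)\big]\delta_j(k)+\big[(e_j-\tfrac1m\1_m)\otimes\1_n\big]z_j(k),
\]
with $\delta_j(0)=0$; the only forcing is sensor $j$'s residual $z_j(k)$, which has bounded covariance by Lemma~\ref{lmm:localeqv}. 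I would then diagonalize $\mathcal{L}_{\mathcal{G}}=UMU^{T}$ with $U$ orthogonal having first column $\1_m/\sqrt m$ and $M=\operatorname{diag}(\mu_1,\dots,\mu_m)$, $\mu_1=0$, and set $\zeta_j(k)=(U^{T}\otimes I_n)\delta_j(k)$. This block-diagonalizes the recursion into the $m$ modes $\zeta_j^{(r)}(k+1)=(S-\mu_r\1_n\Gamma)\zeta_j^{(r)}(k)+\big[U^{T}(e_j-\tfrac1m\1_m)\big]_{r}\1_n z_j(k)$. Because $e_j-\frac1m\1_m$ is orthogonal to $\1_m$, the $r=1$ (consensus-direction) mode has zero input and, with zero initial condition, stays identically zero; for $r\in\{2,\dots,m\}$ the scalar gain is bounded and, by Lemma~\ref{lmm:poleplacement} (applicable since \eqref{eqn:unstable} holds and $\Gamma$ is chosen via \eqref{eqn:Gamma}--\eqref{eqn:riccati}), $\rho(S-\mu_r\1_n\Gamma)<1$. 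Since $\|(S-\mu_r\1_n\Gamma)^{t}\|$ decays geometrically, the triangle inequality for the $L^2$ norm applied to the convolution sum for $\zeta_j^{(r)}(k)$ gives a uniform bound on $\cov(\zeta_j^{(r)}(k))$; transforming back bounds $\cov(\delta_{i,j}(k))$, and combining with the previous paragraph bounds $\cov(\breve{x}_i(k)-x(k))$.

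The step I expect to be the main obstacle is the derivation and diagonalization of the disagreement recursion: one must carry out the block-wise subtraction in \eqref{eqn:eta} to isolate the Laplacian-weighted consensus term, and then observe that a single orthogonal transform simultaneously (i) decouples the network into exactly the scalar-Laplacian modes $S-\mu_r\1_n\Gamma$ to which Lemma~\ref{lmm:poleplacement} applies and (ii) annihilates the consensus-direction mode because the forcing vector $e_j-\frac1m\1_m$ is orthogonal to $\1_m$. Everything else --- the boundedness of $z_j(k)$, the decay of $\rho(S-\mu_r\1_n\Gamma)$, and the final assembly --- follows directly from Lemmas~\ref{lmm:localeqv} and~\ref{lmm:poleplacement} together with the earlier results.
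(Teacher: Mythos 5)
Your proposal is correct and follows essentially the same route as the paper: split the error into the centralized Kalman error plus the deviation of $\eta_i$ from the network average, diagonalize $\mathcal{L}_{\mathcal{G}}$ with an orthogonal matrix whose first column is $\1_m/\sqrt{m}$ so that the consensus mode vanishes and the remaining modes $S-\mu_r\1_n\Gamma$ are Schur by Lemma~\ref{lmm:poleplacement}, then bound the covariance of the convolution sum driven by the stable residuals $z_j(k)$ via the $L^2$ triangle inequality. The only substantive difference is that the paper pushes past boundedness to an exact asymptotic covariance by augmenting the state with $\epsilon(k)$ and $x^s(k)$ and solving a Lyapunov equation (which is what Corollary~\ref{col:error} needs), whereas your per-block-$j$ decoupled argument stops at the boundedness the theorem actually asserts.
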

\begin{proof}
Due to space limitation, the proof is given in Appendix-\ref{sec:appB}. 
  \end{proof}

The proof of Theorem \ref{thm:observable} implies that we present a distributed estimation scheme with quantifiable performance.  
\revise{
\begin{corollary}\label{col:error}
Suppose that the Mahler measure of $S$ meets condition \revise{\eqref{eqn:unstable}}, and $\Gamma$ is designed based on \eqref{eqn:Gamma}--\eqref{eqn:riccati}. Let $\breve W$ be the asymptotic error covariance of local estimates. Namely,
$$\breve W\triangleq \lim_{k\to\infty} \cov(\breve{e}(k)),$$  
where $\breve{e}(k)\triangleq \col[(\breve{x}_1(k)-x(k)),\cdots,(\breve{x}_m(k)-x(k))]$.
By using Algorithm \ref{alg:unobservable}, it holds that
 \begin{equation}
	\breve W = \bar W + (\1_m\1_m^T )\otimes P,
\end{equation}
where $\bar W$ is the asymptotic error covariance between local estimate and the Kalman estimate, and $P$ is the error covariance of Kalman filter as defined in \eqref{eqn:KFcov}. 
Moreover, $\breve W$ can be exactly calculated.
\end{corollary}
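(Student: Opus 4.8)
The plan is to decompose the local error $\breve{e}_i(k) = \breve{x}_i(k) - x(k)$ into two pieces: the deviation of the local estimate from the centralized Kalman estimate, $\breve{x}_i(k) - \hat{x}(k)$, and the Kalman estimation error itself, $\hat{x}(k) - x(k)$. First I would write $\breve{e}_i(k) = (\breve{x}_i(k) - \hat{x}(k)) + (\hat{x}(k) - x(k))$ and stack these over all sensors, so that $\breve{e}(k) = \bar{e}(k) + \1_m \otimes (\hat{x}(k) - x(k))$, where $\bar{e}(k) \triangleq \col[(\breve{x}_1(k)-\hat{x}(k)),\cdots,(\breve{x}_m(k)-\hat{x}(k))]$. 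Taking covariances, $\cov(\breve{e}(k)) = \cov(\bar{e}(k)) + (\1_m\1_m^T)\otimes\cov(\hat{x}(k)-x(k)) + \text{(cross terms)}$. The cross term is $\mathbb{E}[\bar{e}(k)(\1_m\otimes(\hat{x}(k)-x(k)))^T] + (\text{transpose})$, so the main structural claim reduces to showing that, in the limit $k\to\infty$, this cross term vanishes.

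The key step for the cross term is to argue that the asymptotic correlation between $\bar{e}(k)$ and the Kalman error $\hat{x}(k)-x(k)$ is zero. I would exploit the decoupled structure established in the paper: the residuals $z_i(k)$ drive both the consensus-synchronization dynamics (through $\eta_i$) and, via the lossless decomposition, the Kalman estimate. However, the innovations/residuals feeding the Kalman filter form a white sequence, and $\hat{x}(k)-x(k)$ depends (asymptotically) only on the noise realizations $\{w(\ell),v(\ell): \ell < k\}$ through the steady-state Kalman recursion, while $\bar{e}(k)$ — being the synchronization error — is driven by the same $z_i$'s but filtered through the stable closed-loop consensus dynamics. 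The cleanest route is: (i) from Theorem~\ref{thm:optimal}, $\frac1m\sum_i \breve{x}_i(k) = \hat{x}(k)$, so $\frac1m\sum_i \bar{e}_i(k) = 0$, i.e., $\bar{e}(k)$ lies in the "disagreement" subspace orthogonal to $\1_m$; (ii) show via the error dynamics derived in the proof of Theorem~\ref{thm:observable} (Appendix-\ref{sec:appB}) that $\bar{e}(k)$ satisfies a stable recursion whose asymptotic covariance, together with its cross-covariance with $\hat{x}(k)-x(k)$, is governed by a Lyapunov equation with a block-triangular system matrix. Since $\hat{x}(k)-x(k)$ already has a well-defined limiting covariance $P$ (by \eqref{eqn:KFcov}), and the joint state $(\bar{e}(k),\,\hat{x}(k)-x(k))$ evolves under a stable LTI map driven by white noise, the joint asymptotic covariance $\bar W_{\text{joint}}$ is the unique solution of a discrete Lyapunov equation; the $(\1_m\1_m^T)\otimes P$ block and the $\bar W$ block then appear on the diagonal, and I would check the off-diagonal block solves to zero using the orthogonality $\1_m^T \bar{e}(k) = 0$ together with the fact that $\hat x(k)-x(k)$ contributes a rank structure of the form $\1_m\otimes(\cdot)$.

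Finally, to establish that $\breve W$ can be exactly computed, I would collect the full augmented error state — consisting of the stacked $\eta_i(k)$'s (or equivalently $\bar{e}(k)$), together with the Kalman error — write its closed-loop dynamics as $\psi(k+1) = \mathcal{A}_{cl}\psi(k) + \mathcal{B}_{cl}n(k)$ with $n(k)$ white and $\mathcal{A}_{cl}$ Schur-stable (stability of $\mathcal A_{cl}$ being exactly what Theorem~\ref{thm:observable} and Lemma~\ref{lmm:poleplacement} guarantee, via $\rho(S-\mu_j\1_n\Gamma)<1$), and then $\lim_k \cov(\psi(k))$ is the unique solution $X$ of $X = \mathcal{A}_{cl} X \mathcal{A}_{cl}^T + \mathcal{B}_{cl}\,\mathbb{E}[n n^T]\,\mathcal{B}_{cl}^T$; reading off the appropriate sub-block and applying $\breve{x}_i = mF\eta_i$ (the linear output map) yields $\breve W$ explicitly. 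The main obstacle I anticipate is the cross-term vanishing: it requires carefully identifying which modes of the synchronization error are correlated with the Kalman error and showing that projecting onto the disagreement subspace (where $\bar e$ lives) annihilates the $\1_m\otimes(\hat x - x)$ contribution — in other words, that the synchronization disagreement dynamics and the consensus-direction dynamics decouple in the Lyapunov equation. If that decoupling is not exact at finite $k$ but only asymptotically, I would need to show the transient cross-covariance decays, which follows from Schur stability of $\mathcal A_{cl}$ but must be stated carefully.
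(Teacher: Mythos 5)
Your overall architecture matches the paper's: the decomposition $\breve{e}(k)=\bar{e}(k)+\1_m\otimes(\hat{x}(k)-x(k))$, and the computation of $\bar W$ by writing the augmented disagreement state as a Schur-stable LTI system driven by white noise and solving a discrete Lyapunov equation, are exactly what the paper does in Appendix-\ref{sec:appB} (equations \eqref{eqn:Lyp1}--\eqref{eqn:Lyp2}); the corollary's proof is then just a pointer to \eqref{eqn:covariance}. The gap is in the one step you yourself flag as the main obstacle: the vanishing of the cross term. Your proposed mechanism --- that the constraint $\1_m^T\bar{e}(k)=0$ (the disagreement subspace) annihilates the correlation with the consensus-direction term $\1_m\otimes(\hat{x}(k)-x(k))$ --- does not work. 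That constraint is a deterministic linear relation among the components of $\bar e(k)$; it is not statistical orthogonality. Concretely, the $(i,j)$ block of $\mathbb{E}[\bar e(k)(\1_m\otimes\hat e(k))^T]$ equals $M_i-\frac1m\sum_l M_l$ with $M_i=\mathbb{E}[\breve x_i(k)\hat e(k)^T]$, and projecting onto the disagreement subspace only kills this if all the $M_i$ happen to coincide, which nothing in your argument establishes. (A toy counterexample: $\bar e(k)=(\1_m-me_1)\otimes\hat e(k)$ satisfies $\1_m^T\bar e(k)=0$ yet is plainly correlated with $\1_m\otimes\hat e(k)$.)

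The correct and much shorter argument --- the one the paper uses --- is the MMSE orthogonality principle: each $\breve x_i(k)=mF\eta_i(k)$ is a linear function of the measurements $y(1),\dots,y(k)$, and the steady-state Kalman error $\hat e(k)=\hat x(k)-x(k)$ is orthogonal to every function of those measurements because $\hat x(k)$ is the optimal estimate. Hence each $M_i=0$ individually, so $\mathbb{E}[\bar e_i(k)\hat e(k)^T]=0$ for every $i$ and every finite $k$, not merely in the limit; no joint Lyapunov equation, whiteness-of-innovations argument, or transient-decay analysis is needed for the cross term. Once you replace your disagreement-subspace argument with this orthogonality step, the rest of your proposal (the Lyapunov computation of $\bar W$ from the stable augmented dynamics, giving exact computability of $\breve W$) goes through and agrees with the paper.
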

}

 \revise{As seen from the calculation, $\bar W$, i.e., the performance gap between our estimator and the optimal Kalman filter, is purely caused by the consensus error.} Therefore, if infinite consensus steps are allowed between two consecutive sampling instants, the consensus error vanishes and the performance of the proposed estimator coincides with that of the Kalman filter. 

Combining Theorems \ref{thm:optimal} and \ref{thm:observable}, the local estimator is stable at each sensor side. Therefore, we conclude that by applying the algorithm designed for linear system synchronization, i.e., \eqref{eqn:eta}, the problem of distributed state estimation is resolved.

\begin{remark}\label{rmk:n<m}
	Note that Algorithm \ref{alg:unobservable} requires each agent to send out an $m$-dimensional vector $\Delta_i(k)$ at any time. Therefore, in the network with a large number of sensors, i.e., $n<m$, this solution will cause a high communication cost. To address this issue, this remark, by leveraging the reduced-order estimator \eqref{eqn:aft_mdl_reduce} in Theorem \ref{lmm:mdl_reduce}, modifies Algorithm \ref{alg:unobservable} to introduce less communication complexity. To be specific, we aim to implement the reduced order system \eqref{eqn:aft_mdl_reduce} with distributed estimators. Similar as before, any agent $i$ stores its estimate on all the others in a variable $\vartheta_i(k)$, where 
	\begin{equation} 
	\vartheta_i(k) \triangleq
	{\left[\begin{array}{c}
		\vartheta_{i, 1}(k) \\
		\vdots \\
		\revise{\vartheta_{i, n}(k)}
		\end{array}\right]\in\mathbb{R}^{n^2}.} 
	\end{equation}
For each sensor $i$, it is initialized with $\hat{x}_i(0)=0$ and $\vartheta_i(0)=0$. For the case of $n<m$, the estimation algorithm works as in Algorithm~\ref{alg:n<m}. Following similar arguments, the local estimator at each sensor side is proved to be stable.

\revise{Combining it with Algorithm \ref{alg:unobservable}, we conclude the size of message sent by each sensor at any time is $\min\{m,n\}$. Compared with the existing solutions in distributed estimation, e.g., \cite{olfati2005distributed,olfati2009kalman,battistelli2014consensus,li2011consensus,battistelli2016stability}, our algorithm enjoys lower message complexity.}

\begin{remark}\label{rmk:replace}
	Notice that sensor node $i$ has perfect information of its own local estimate $\xi_i(k)$. Therefore, instead of using $\eta_{i, i}(k)$ to infer $\xi_i(k)/m$, node $i$ can just use $\xi_i(k)/m$ to replace $\eta_{i, i}(k)$ in \eqref{eqn:localfuse}, which potentially improves the performance of the estimators.  
\end{remark}


\end{remark}
  \begin{algorithm}
	1:\: Using the latest measurement from itself, sensor $i$ computes the local residual and update the local estimate by
	\begin{equation*}
		\begin{split}
			z_i(k) = y_i(k+1)-\beta^T\hat{\xi}_i(k),\\
			\hat{\xi}_i(k+1) = S \hat{\xi}_i(k)+\1_n z_i(k).
		\end{split}
	\end{equation*}
	2:\: Compute $\Delta_i(k)=(I_n\otimes\Gamma)\vartheta_i(k)$ such that $\Gamma$ is calculated by \eqref{eqn:Gamma}. Collect $\Delta_j(k)$ from neighbors and fuse the neighboring information with the consensus algorithm as
	\begin{equation}
		\begin{split}
			\vartheta_i(k+1) =&(I_n\otimes S)\vartheta_i(k) +T_iz_i(k)\\&+ (I_n \otimes \1_n)\sum_{j=1}^m a_{ij}(\Delta_j(k)-\Delta_i(k)),
		\end{split}
	\end{equation}
	where $T_i$ is defined in \eqref{eqn:T_i}.\\
	3:\: Update the fused estimate on system state as:
	\begin{equation}
		\breve{x}_i(k+1)=mH\vartheta_i(k+1),
	\end{equation}
	where $H$ is given in \eqref{eqn:H}.\\
	4:\: Transmit the new state $\Delta_i(k+1)$ to neighbors.
	\caption{Distributed estimation algorithm $2$ for sensor $i$}
	\label{alg:n<m}
\end{algorithm}

\section{Extensions of Proposed Solutions}\label{sec:discuss}
In the previous sections, we leverage the linear system synchronization algorithm proposed in \cite{you2011network}, to solve the problem of distributed state estimation. In this section, we aim to extend such a result and show that any control strategy, which can facilitate the linear system synchronization, can be modified to yield a stable distributed estimator. As a result, we bridge the fields of distributed state estimation and linear system synchronization. 

Let us consider the synchronization of the following homogeneous LTI system:
  \begin{align}\label{eqn:linear}
  	\eta_i(k+1) &= \tilde{S}\eta_i(k) + \tilde{B}u_i(k), \;\forall i\in\mathcal{V},
  \end{align}
where $u_i(k)$ is the control input of agent $i$. In literature, a large variety of synchronization algorithms has been proposed with the framework below:
\begin{equation}\label{eqn:general}
\begin{split}
	&\omega_i(k+1) = \mathcal{A}\omega_i(k)+\mathcal{B}\eta_i(k+1), \\
	&\Delta_i(k) = \tilde{\Gamma} \omega_i(k),\\
	&u_i(k) = \sum_{j=1}^m a_{ij}\gamma_{ij}(k)(\Delta_j(k)-\Delta_i(k)),  
\end{split}
\end{equation}
where $\omega_i(k)$ is the ``hidden state" that is necessary for agent $i$ to yield the communication state $\Delta_i(k)$ and input $u_i(k)$, and $\tilde{\Gamma}$ refers to the control gain. Notice that \eqref{eqn:general} can be used to model the controller with memory. Moreover, $\gamma_{ij}(k)\in [0,1]$ models the fading or lossy effect of the communication channel from agent $j$ to agent $i$. At every time, the agent collects the available information in its neighborhood and synthesizes its communication state and control signal via \eqref{eqn:general}. 


For simplicity, we denote $\mathcal{U}$ as the control strategy that can be represented by \eqref{eqn:general}. Let the average of local states at time $k$ be
\begin{equation*}
\bar{\eta}(k) = \frac{1}{m} \sum_{i=1}^m\eta_i(k).
\end{equation*}
The network of subsystems \eqref{eqn:linear} reaches \textit{strong} synchronization under $\mathcal{U}$, if the following statements hold at any time:

\begin{enumerate}
\item \textbf{Consistency:} the average of local states keeps consistent throughout the execution, i.e., 
\begin{equation}\label{eqn:consistency}
	\bar{\eta}(k+1) =\tilde{S}\bar{\eta}(k).
\end{equation}
\item \textbf{Exponential Stability:} agents exponentially reach consensus in mean square sense, i.e., there exist $c>0$ and $\rho\in(0,1)$ such that 
\begin{equation}\label{eqn:consensus}
	\mathbb{E}[||\eta_i(k)-\bar{\eta}(k)||^2] \leq c\rho^{k}, \;\forall i\in\mathcal{V}.
\end{equation}
\end{enumerate}

We now review several existing strategies which facilitate the strong synchronizationand show that they can be represented by \eqref{eqn:general}: 

\begin{enumerate}
\item Let $\Delta_i(k)=\tilde{\Gamma}\eta_i(k)$ be the communication state defined in Section~\ref{sec:analysis}. To facilitate the synchronization of homogeneous linear systems in undirected communication topology, You \textit{et al.} \cite{you2011network} design the following control law:
\begin{equation}\label{eqn:You}
	u_i(k) = \sum_{j=1}^m a_{ij}(\Delta_j(k)-\Delta_i(k)),
\end{equation} 
which coincides with \eqref{eqn:general}. 
\item Another example is the filtered consensus protocol given in \cite{gu2011consensusability}. By designing the hidden state as
\begin{equation}
	\omega_i(k) = F(q)\eta_i(k),
\end{equation}
where $q$ is the unit advance operator, i.e., $q^{-1}s(k)=s(k-1)$, and $F(z)$ is the transfer function of a square stable filter, the synchronization of linear systems is achieved by \eqref{eqn:general} under a more relaxed condition than \revise{\eqref{eqn:unstable}}, that is: 
$
\prod_j |\lambda_j^u(S)| < \frac{1+\sqrt{\mu_2/\mu_m}}{1-\sqrt{\mu_2/\mu_m}}.
$
\item Instead of focusing on perfect communication channels, the authors in \cite{you2013consensus} and \cite{xu2019distributed} develop the control protocols to account for the random failure on communication links and Markovian switching topologies, respectively. By modeling the packet loss with the Bernoulli random variable $\gamma_{ij}(k)\in\{0,1\}$, these works complement the results in \cite{you2011network} and prove the mean square stability under the control strategy \eqref{eqn:general}.
\end{enumerate}

Notice that Algorithms~\ref{alg:unobservable} and \ref{alg:n<m} utilize \eqref{eqn:You} for achieving synchronization and producing stable distributed estimators. In what follows, we argue that the optimal Kalman estimate can indeed be distributively implemented using any linear system synchronization algorithms facilitating \eqref{eqn:consistency}-\eqref{eqn:consensus}. To be specific, Algorithm~\ref{alg:unobservable} should be modified\footnote{Similarly, in the case of $n<m$, one can also derive the general form of Algorithm~\ref{alg:n<m} with any linear system synchronization strategy $\mathcal{U}$.} by replacing \eqref{eqn:eta} with
\begin{equation}\label{eqn:eta2}
\eta_i(k+1) =\tilde{S}\eta_i(k)+\tilde{B}u_i(k)+\tilde{L}_iz_i(k),
\end{equation}
 where $u_i(k)$ is generated by $\mathcal{U}$ that facilitates \eqref{eqn:consistency}-\eqref{eqn:consensus}. We then state the stability of local estimators as below:

\begin{theorem}\label{thm:general}
Consider any algorithm $\mathcal{U}$ which facilitates the statements \eqref{eqn:consistency} and \eqref{eqn:consensus}. At any time $k$, suppose each $\gamma_{ij}(k)$ is independent of the noise $\{w(k)\}$ and $\{v(k)\}$. Then \eqref{eqn:eta2} yields a stable estimator for each sensor node. Specifically, the following statements hold for any $k\geq 0$:
\begin{enumerate}
	\item the average of local estimates from all sensor coincides with the optimal Kalman estimate;
	\item the error covariance of each local estimate is bounded. 
\end{enumerate} 
\end{theorem}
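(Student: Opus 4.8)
The plan is to mirror the structure of the proofs of Theorems~\ref{thm:optimal} and \ref{thm:observable}, extracting exactly the two properties of the synchronization scheme that those proofs actually used, and then invoking \eqref{eqn:consistency}--\eqref{eqn:consensus} in their place. For the first statement, I would sum \eqref{eqn:eta2} over $i=1,\dots,m$. The consistency property \eqref{eqn:consistency} is precisely the statement that $\sum_i \tilde B u_i(k)=0$ along the averaged trajectory (equivalently, the control inputs are "balanced"), so the aggregate dynamics reduce to $\sum_i \eta_i(k+1)=\tilde S\sum_i\eta_i(k)+\sum_i\tilde L_i z_i(k)$, which is exactly \eqref{eqn:etasum}. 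Comparing with the stacked local filter \eqref{eqn:xi_z}/\eqref{eqn:fullmodel} and using the zero initialization gives $\hat\xi_j(k)=\sum_i \eta_{i,j}(k)$ as in \eqref{eqn:xvseta}, and then the weighted-sum identity $\tfrac1m\sum_i\breve x_i(k)=\sum_i F\eta_i(k)=\sum_j F_j\hat\xi_j(k)=\hat x(k)$ from Lemma~\ref{lmm:decompose} closes part~1 verbatim.

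For the second statement, the idea is to decompose the local error as $\breve x_i(k)-x(k) = \big(\breve x_i(k)-\hat x(k)\big) + \big(\hat x(k)-x(k)\big)$. The second term is the centralized Kalman error, whose covariance is the bounded steady-state $P$, so it suffices to bound $\cov(\breve x_i(k)-\hat x(k))$. Writing $\delta_i(k)\triangleq \eta_i(k)-\bar\eta(k)$ and recalling $\breve x_i(k)=mF\eta_i(k)$ and $\hat x(k)=mF\bar\eta(k)$, we get $\breve x_i(k)-\hat x(k)=mF\delta_i(k)$, so the task is to show $\cov(\delta_i(k))$ is bounded. Here is where \eqref{eqn:consensus} enters: the consensus error $\delta_i(k)$ decays like $c\rho^k$ in mean square from a given initial condition, but it is continually re-excited by the residual inputs $\tilde L_i z_i(k)$ (projected onto the disagreement subspace). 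Since Lemma~\ref{lmm:localeqv} guarantees each $z_i(k)$ has covariance bounded uniformly in $k$, and \eqref{eqn:consensus} gives a geometric decay rate $\rho<1$, a standard discrete-time variation-of-constants / geometric-series argument bounds $\cov(\delta_i(k))$ uniformly in $k$. The hypothesis that $\gamma_{ij}(k)$ is independent of $\{w(k)\},\{v(k)\}$ is needed so that the (possibly random) contraction operator and the input noise are independent, letting the mean-square bound \eqref{eqn:consensus} be applied term-by-term to the stochastic convolution without cross-terms blowing up.

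The main obstacle is making the re-excitation argument rigorous when the synchronization map is only specified abstractly through \eqref{eqn:general} and through the mean-square bound \eqref{eqn:consensus}, rather than by an explicit linear recursion as in the proof of Theorem~\ref{thm:observable}. Concretely, \eqref{eqn:consensus} is a statement about the unforced (or autonomously forced) trajectory, and one must argue that the same geometric decay governs the response to the exogenous residual input entering through $\tilde L_i z_i(k)$ — i.e. that the disagreement dynamics, as an input-to-state map from $\{z(k)\}$ to $\{\delta_i(k)\}$, inherit exponential stability from \eqref{eqn:consensus}. The clean way to handle this is to note that subtracting the averaged dynamics \eqref{eqn:consistency} from \eqref{eqn:eta2} yields a closed recursion for the stacked disagreement vector (including the hidden states $\omega_i$) that is driven linearly by $(I-\tfrac1m\1\1^T)\otimes$(residual input), whose homogeneous part is exactly the error dynamics certified by \eqref{eqn:consensus}; exponential mean-square stability of that homogeneous part plus bounded input covariance then gives bounded state covariance by the usual Lyapunov/summability estimate. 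I would also remark, as the paper does after Corollary~\ref{col:error}, that the resulting gap is purely the consensus error and vanishes as the number of inner consensus iterations grows.
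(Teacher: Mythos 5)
Your proposal follows essentially the same route as the paper's proof: part 1 is the averaging argument of Theorem~\ref{thm:optimal} with \eqref{eqn:consistency} cancelling the control inputs, and part 2 writes the stacked disagreement (including the hidden states $\omega_i$) as a linear convolution of the bounded-covariance residuals $z(t)$ against a transition operator whose mean-square impulse response decays like $\rho^{k-t}$ by \eqref{eqn:consensus}, with the independence of $\gamma_{ij}(k)$ from the noise used exactly where you say. The only device you leave implicit is how the correlated terms of that convolution are summed: the paper invokes the $L^2$ triangle inequality (its Lemma~\ref{lmm:cauchy}, proved via Cauchy--Schwarz) to obtain $\cov(\delta_i(k+1)) \le \bigl(\sum_{t=0}^{k}\sqrt{c\rho^{k-t}}\bigr)^2 \le c/(1-\sqrt{\rho})^2$, rather than a naive sum of per-term covariances, which would be invalid since the $\kappa_i(k,t)$ are not uncorrelated across $t$.
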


\begin{proof}
The proof is given in Appendix-\ref{sec:appC}. 
\end{proof}

\begin{remark}
	Theorem~\ref{thm:general} assumes the independence of the communication topology and system/measurement noises. Therefore, as for the event-based synchronization algorithms, where the communication relies on the agents' states, we cannot analyze its efficiency of solving the distributed estimation problem by directly resorting to Theorem~\ref{thm:general}. In the future work, we will continue to investigate this topic.
\end{remark}


\revise{In contrast with Fig~\ref{fig:existing}, this work, by using the lossless decomposition of Kalman filter, decouples the local filter from the consensus process, as shown in Fig.~\ref{fig:blkdiag}. The decoupling enables us to leverage the rich results in linear systems synchronization to analyze the performance of local estimators, as proved in Theorem~\ref{thm:general}. Moreover, following the similar proof arguments as that of Theorem~\ref{thm:observable}, we can show that with our framework, the error covariance of each local estimate actually consists of two orthogonal parts: the inherent estimation error of Kalman filter and the distance from local estimate to Kalman filter, namely:
\begin{equation*}
	\begin{aligned}
	&\cov(\breve e_i(k))=\cov(\breve x_i(k)-x(k))\\=&\cov(\breve x_i(k)-\hat x(k)+\hat x(k)-x(k))\\
	=&\cov(\breve x_i(k)-\hat x(k))+\cov(\hat x(k)-x(k))\\
	=&\cov\Big(\breve x_i(k)-\frac{1}{m}\sum_{i=1}^m \breve x_i(k)\Big)+\cov(\hat x(k)-x(k))\\
	=&m^2F\cov\Big(\eta_i(k)-\frac{1}{m}\sum_{i=1}^m \eta_i(k)\Big)F^T+\cov(\hat x(k)-x(k)),
\end{aligned}
\end{equation*}
	where the third equality holds due to the optimality of Kalman filter, and the last equality holds by \eqref{eqn:localfuse}. Notice that the second term of RHS is the error covariance of Kalman filter, while first term is the error between local estimate and Kalman filter and purely determined by the consensus process. Therefore, by choosing proper strategy $\mathcal{U}$, extensive results on achieving strong synchronization can be applied to \eqref{eqn:eta2} to deal with the consensus error in various settings, such as directed graph, time-varying topologies, \textit{etc}. \modify{Particularly, if infinite consensus steps are allowed between two consecutive sampling instants, the consensus error vanishes, i.e., $\eta_i(k)-\frac{1}{m}\sum_{i=1}^m \eta_i(k)=0$, and the performance of the proposed estimator is optimal since it coincides with that of the Kalman filter.} That means the global optimality can be guaranteed.
}

  \section{Numerical Example}\label{sec:simulation}
  In this section, we present numerical examples to verify the theoretical results obtained in previous sections. 
 
 \subsection{Numerical example when $n<m$}
Let us consider the case where four sensors cooperatively estimate the system state.
  The system parameters are listed below:
  \begin{equation}
\begin{split}
& A = 
\begin{bmatrix}
0.9 & 0\\
0 & 1.1
\end{bmatrix},\;
C = 
\begin{bmatrix}
1 & 0 & 1 & 1\\
0 & 1 & 1 & -1
\end{bmatrix}^T,\\
&Q=0.25I_2, \;R=4I_4.
\end{split}
\end{equation}
\revise{In this example, the number of states is smaller than that of sensors, i.e. $n<m$. We therefore choose Algorithm~\ref{alg:n<m}.} Moreover, notice that the system is unstable, and sensor $1$ cannot observe the unstable state.

Suppose that the topology of these four sensors is a ring with weight $1$ for each edge. The Laplacian matrix is thus:
\begin{equation}
  \mathcal{L_G}=\begin{bmatrix}
    2 & -1 & 0 & -1\\
    -1 & 2 & -1 & 0\\
    0 & -1 & 2 & -1\\
    -1& 0 & -1 & 2
  \end{bmatrix}.
\end{equation}
\revise{It is not difficult to check that the second smallest and the largest eigenvalues of $\mathcal{L_G}$ are respectively $\mu_2=2$, $\mu_4=4$. To fulfill the sufficient condition in Lemma~\ref{lmm:poleplacement}, let us choose $\zeta=0.5$.}

%

We set the initial state $x(0)\sim \mathcal{N}(0,I)$ and the initial local estimate $\breve x_i(0)=0$ for each sensor $i\in\{1,2,3,4\}$. It can be seen that the mean squared local estimate error $e_i(k)$ enters steady state and is stable after a few steps (see Fig.~\ref{fig:x}).



\begin{figure}[htbp]
	\begin{minipage}[b]{0.5\textwidth}
		\centering
		\resizebox{0.9\textwidth}{!}{\input{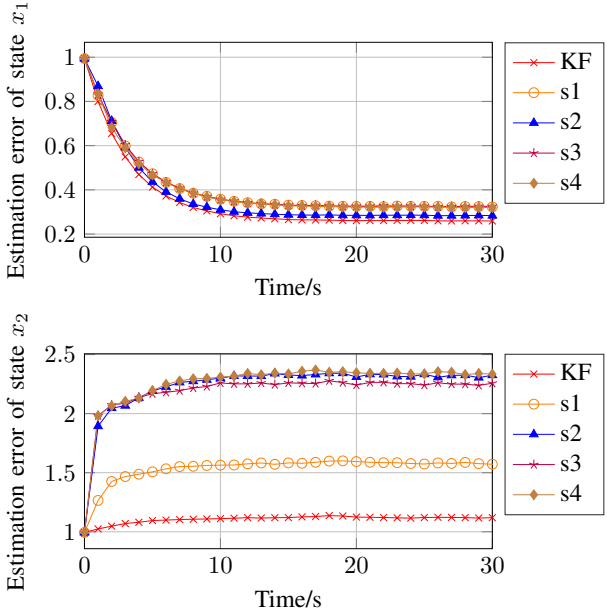}}
	\end{minipage}%
	\\
	\begin{minipage}[b]{0.5\textwidth}
		\centering
		\resizebox{0.9\textwidth}{!}{\begin{tikzpicture}
\begin{axis}[%
width=2.3in,
height=1.1in,
scale only axis,
xmin=0,
xmax=30,
xlabel={Time/s},
ylabel={Estimation error of state $x_2$},
axis background/.style={fill=white},
xmajorgrids,
ymajorgrids,
legend style={legend cell align=left, align=left, draw=white!15!black},
legend pos={outer north east}
]
    \addplot[color={red}, mark={x}]
        coordinates {
            (0,0.9943765995349962)
            (1,1.0241254689704056)
            (2,1.0473887499847159)
            (3,1.0697339448807577)
            (4,1.0803713746007921)
            (5,1.0949033915033333)
            (6,1.0991482380149056)
            (7,1.1034194862558937)
            (8,1.106307793947288)
            (9,1.1085925380380288)
            (10,1.110999599114129)
            (11,1.1147645806759343)
            (12,1.1193703731575648)
            (13,1.116094933319016)
            (14,1.1192249437084907)
            (15,1.1199115573191498)
            (16,1.1260096469155687)
            (17,1.1291450572405135)
            (18,1.136950337574846)
            (19,1.133455361109355)
            (20,1.1244685761514959)
            (21,1.1227919434732434)
            (22,1.1206090843375092)
            (23,1.1180481048662483)
            (24,1.1155404791626586)
            (25,1.1204428581200438)
            (26,1.1224462151114232)
            (27,1.121752685683367)
            (28,1.1196943320049257)
            (29,1.1164613963806695)
            (30,1.1193740859610601)
            (31,1.1196678829620261)
            (32,1.1262821415531057)
            (33,1.1156317612924114)
            (34,1.116567512625055)
            (35,1.116742559368047)
            (36,1.117315577996865)
            (37,1.118028323124454)
            (38,1.1164678892437345)
            (39,1.117983506365516)
            (40,1.123098865189546)
            (41,1.1230946876927135)
            (42,1.1225522521222113)
            (43,1.1224618836995262)
            (44,1.1238984888915637)
            (45,1.125901217233803)
            (46,1.1226315332791659)
            (47,1.119965687831627)
            (48,1.1193711594564655)
            (49,1.1226049410413392)
            (50,1.1241068397421996)
        }
        ;
    \addlegendentry {KF}
    \addplot[color={orange}, mark={o}]
        coordinates {
            (0,0.9943765995349962)
            (1,1.2652924039335356)
            (2,1.4256320040837056)
            (3,1.4673510411562172)
            (4,1.4873395687011577)
            (5,1.5073301774920105)
            (6,1.5329161641112685)
            (7,1.5520030589757723)
            (8,1.554391863843294)
            (9,1.5615200183722036)
            (10,1.563923717117754)
            (11,1.5653870801152323)
            (12,1.574879924241972)
            (13,1.5825303535941295)
            (14,1.5711163184525245)
            (15,1.5831105788545525)
            (16,1.5799868657614462)
            (17,1.5882189783567848)
            (18,1.5979613761646876)
            (19,1.6003075631065073)
            (20,1.5941329592447682)
            (21,1.588284581504355)
            (22,1.5851274627279721)
            (23,1.584373968832322)
            (24,1.577858666299255)
            (25,1.5735674219348068)
            (26,1.5827615472696204)
            (27,1.5786961819804337)
            (28,1.587122886228931)
            (29,1.5765449008480295)
            (30,1.571758603170185)
            (31,1.5722071728043354)
            (32,1.5790341798123353)
            (33,1.5937749764613511)
            (34,1.5745873058034463)
            (35,1.5701470146906793)
            (36,1.5669255079112894)
            (37,1.5699117333623531)
            (38,1.57213903822218)
            (39,1.5750483423635424)
            (40,1.5761992587434166)
            (41,1.5812174368332743)
            (42,1.5817244605547665)
            (43,1.586364793112144)
            (44,1.5853213582576369)
            (45,1.5866770960927927)
            (46,1.5821300768923279)
            (47,1.5851061919445362)
            (48,1.5826051452764909)
            (49,1.572164251666703)
            (50,1.5822951282611506)
        }
        ;
    \addlegendentry {s1}
    \addplot[color={blue}, mark={triangle*}]
        coordinates {
            (0,0.9943765995349962)
            (1,1.8936625444249422)
            (2,2.0465205257360233)
            (3,2.066951276012711)
            (4,2.1338615693851746)
            (5,2.1922909740541803)
            (6,2.226487115248709)
            (7,2.263933580097268)
            (8,2.2742324432017647)
            (9,2.2866612655231235)
            (10,2.300561709236413)
            (11,2.3187874391349936)
            (12,2.3183107436885133)
            (13,2.3172153186124294)
            (14,2.339358433536219)
            (15,2.326601894094998)
            (16,2.322100645625816)
            (17,2.3311028833616803)
            (18,2.3412621523512342)
            (19,2.342196254698138)
            (20,2.309811332436182)
            (21,2.3326324701167347)
            (22,2.331147100728079)
            (23,2.314863629664187)
            (24,2.31285475802809)
            (25,2.3332171428385466)
            (26,2.3071309244263793)
            (27,2.3241146984247614)
            (28,2.3254061798842365)
            (29,2.304722589550026)
            (30,2.3250475757893168)
            (31,2.326875039249452)
            (32,2.347786042018165)
            (33,2.3050612298653994)
            (34,2.3143523106351727)
            (35,2.323050036506524)
            (36,2.3268902505681694)
            (37,2.326027817270796)
            (38,2.3264477462757323)
            (39,2.3118714464205596)
            (40,2.329829962182898)
            (41,2.313641678866345)
            (42,2.334185411755957)
            (43,2.3313963439610914)
            (44,2.319456118549934)
            (45,2.3358929879940047)
            (46,2.3168383584104326)
            (47,2.321115857214799)
            (48,2.3270072795412897)
            (49,2.3384601141712515)
            (50,2.318873680806025)
        }
        ;
    \addlegendentry {s2}
    \addplot[color={purple}, mark={star}]
        coordinates {
            (0,0.9943765995349962)
            (1,1.9828858762138304)
            (2,2.072953590569488)
            (3,2.0880223184728335)
            (4,2.1360134385063105)
            (5,2.168877836105569)
            (6,2.180518305551039)
            (7,2.194470270203278)
            (8,2.2170095545654873)
            (9,2.2294796048964476)
            (10,2.259813698050696)
            (11,2.252040729684584)
            (12,2.2518134092075446)
            (13,2.258969773117035)
            (14,2.2456372378476983)
            (15,2.260074704840985)
            (16,2.25690801887656)
            (17,2.2546645079335232)
            (18,2.2780084217125545)
            (19,2.2627407160130613)
            (20,2.241840788498835)
            (21,2.2639896937363018)
            (22,2.2653910716270422)
            (23,2.253397842090359)
            (24,2.253664694284399)
            (25,2.2409423831703603)
            (26,2.260721382240181)
            (27,2.2498742978256083)
            (28,2.249337306194815)
            (29,2.2387064936893575)
            (30,2.2572354436442903)
            (31,2.2559199593761075)
            (32,2.2629454575339145)
            (33,2.237792912635502)
            (34,2.259809633090449)
            (35,2.2600701221241586)
            (36,2.2590233377061026)
            (37,2.2566562394095957)
            (38,2.259200706317603)
            (39,2.2522596098668717)
            (40,2.271687711433266)
            (41,2.251367671607407)
            (42,2.2467096704728404)
            (43,2.2653285924423328)
            (44,2.249874385082106)
            (45,2.2693838984464265)
            (46,2.2443897174131746)
            (47,2.2700278666501776)
            (48,2.2596714046786364)
            (49,2.2710007617069685)
            (50,2.2585889284508913)
        }
        ;
    \addlegendentry {s3}
    \addplot[color={brown}, mark={diamond*}]
        coordinates {
            (0,0.9943765995349962)
            (1,1.9816540696459648)
            (2,2.0654615474192903)
            (3,2.1035165609535103)
            (4,2.1315441874490997)
            (5,2.194405042764902)
            (6,2.2477131762354583)
            (7,2.2761788789650046)
            (8,2.2947881679581665)
            (9,2.2977677725464183)
            (10,2.308375957467222)
            (11,2.317980480162738)
            (12,2.339427607324274)
            (13,2.32883551380075)
            (14,2.3427375592794983)
            (15,2.3356439111894307)
            (16,2.3585557228897294)
            (17,2.3690788525616533)
            (18,2.3493756113388495)
            (19,2.354762160569893)
            (20,2.3456192480187106)
            (21,2.3403966321669003)
            (22,2.340282752809893)
            (23,2.3432582094375234)
            (24,2.3354632251712575)
            (25,2.3358984300908086)
            (26,2.3548376694214577)
            (27,2.350751598558515)
            (28,2.330001313064622)
            (29,2.339936620551301)
            (30,2.334139579826345)
            (31,2.3296671247918894)
            (32,2.3477739816701084)
            (33,2.3372527594188757)
            (34,2.3294471328167496)
            (35,2.331507818950099)
            (36,2.3364833657462096)
            (37,2.3324300667271194)
            (38,2.3362024692199643)
            (39,2.3484213589008274)
            (40,2.332847142957334)
            (41,2.3589948008145787)
            (42,2.345722793829973)
            (43,2.3270312885418902)
            (44,2.3439433888447985)
            (45,2.3273308186147994)
            (46,2.36039655167833)
            (47,2.339409769673635)
            (48,2.336415886140528)
            (49,2.3442896822054715)
            (50,2.328126528073856)
        }
        ;
    \addlegendentry {s4}
\end{axis}
\end{tikzpicture}}
	\end{minipage}
	\caption{Average mean square estimation error of system states under Kalman filter and local estimators in 10000 experiments.}
	\label{fig:x}
\end{figure}

 \subsection{Numerical example when $n> m$}
In the second example, we simulate the heat transfer process \footnote{\revise{State estimation in diffusion process has wide applications in sensor network, e.g., urban CO$_2$ emission monitoring \cite{mao2012citysee}, temperature monitoring in data center \cite{parolini2011cyber}, etc. }} in a planar closed region discussed in \cite{Mo15HeatProcess} and \cite{MO2009174}:
\begin{equation}
  \frac{\partial u}{\partial t}=\alpha\Big(\frac{\partial^2u}{\partial x_1^2}+\frac{\partial^2u}{\partial x_2^2}\Big),
\end{equation}
with boundary conditions
\begin{equation}
  \frac{\partial u}{\partial x_1}\Big|_{t,0,x_2}=\frac{\partial u}{\partial x_1}\Big|_{t,l,x_2}=\frac{\partial u}{\partial x_2}\Big|_{t,x_1,0}=\frac{\partial u}{\partial x_2}\Big|_{t,x_1,l}=0,
\end{equation}
where $x_1$ and $x_2$ are the coordinates in the region; $u(t,x_1,x_2)$ indicates the temperature at time $t$ at position $(x_1,x_2)$, $l$ is the side length of the square region and $\alpha$ adjusts the speed of the diffusion process. With a $N\times N$ grid and sample frequency $1$Hz, the diffusion process can be discretized as:
\begin{equation}
  \label{equ:diffusion_process}
  \begin{split}
  u&(k+1,i,j)-u(k,i,j)=\frac{\alpha}{h^2}[u(k,i-1,j)+u(k,i,j-1)\\
  &+u(k,i+1,j)+u(k,i,j+1)-4u(k,i,j)],
  \end{split}
\end{equation}
where $h=\frac{l}{N-1}$ denotes the size of each grid and $u(k,i,j)$ indicates the temperature at time $k$ at location $(ih,jh)$. By collecting all the temperature values of each grid, we define the state variable $U(k)=[u(k,0,0),\cdots,u(k,0,N-1),u(k,1,0),\cdots,u(k,N-1,N-1))]^T$. Further, by introducing process noise into \eqref{equ:diffusion_process}, one derives the following system equation:
\begin{equation}
  U(k+1)=AU(k)+w(k),
\end{equation}
where $w(k)\sim \mathcal{N}(0,Q)$ is Gaussian noise.

As shown in Fig.~\ref{fig:heatmap}, $m$ sensors are randomly deployed in this region to monitor the temperature, where the measurement of each sensor is a linear combination of temperature of the grids around it. Specifically, suppose the location of sensor $s$ is $(\hat{x}_1,\hat{x}_2)$ such that $\hat{x}_1\in[i,i+1)$, $\hat{x}_2\in[j,j+1)$, we define $\Delta \hat{x}_1=x_{i1}-i$ and $\Delta \hat{x}_1=x_{i2}-j$. We assume that the measurement of sensor $s$ at time $k$ is
\begin{equation}
  \label{equ:diffusion_measurement}
  \begin{split}
  y_s(k)=&\frac{1}{h^2}\big[(1-\Delta \hat{x}_1)(1-\Delta \hat{x}_2)u(k,i,j)\\&+\Delta \hat{x}_1(1-\Delta \hat{x}_2)u(k,i+1,j)\\&+
  (1-\Delta \hat{x}_1)\Delta \hat{x}_2 u(k,i,j+1)\\
  &+\Delta \hat{x}_1\Delta \hat{x}_2u(k,i+1,j+1)\big]+v_s(k).
  \end{split}
\end{equation}
We collect the measurements of each sensor at time $k$ and denote it as $Y(k)$, then it follows
\begin{equation}
  Y(k)=CU(k)+v(k),
\end{equation}
where $v(k)\sim \mathcal{N}(0,R)$ is the measurement noise and the measurement matrix $C$ can be derived from \eqref{equ:diffusion_measurement}. The parameters for the simulation are listed below:
\begin{itemize}
  \item $\alpha=0.2$;
  \item $l=4$ and $N=5$, thus the grid size $h=1$;
  \item $n=N^2=25$ and $m=15$. \revise{Therefore, $n>m$, which is different from our first example.};
  \item $Q=0.2I_{25}$ and $R=3I_{15}$.
\end{itemize}

As discussed in Remark \ref{rmk:replace}, we replace $\eta_{i,i}(k)$ with the estimates given by local Kalman filters. The results are shown in Fig.~\ref{fig:heatmap}. Our algorithm achieves better performance compared with local Kalman filters which merely use the measurement of the sensor itself. The improvement of each sensor can be found in TABLE ~\ref{table:improvement}. Specifically, for each sensor $i$, we respectively define the performance of local Kalman filter and our algorithm in terms of:
\begin{equation}
  \varrho_{i1} \triangleq \frac{\tr(\hat{P}_i)}{\tr(P)},
  \varrho_{i2} \triangleq \frac{\tr(\breve{P}_i)}{\tr(P)},
\end{equation}
where $\hat{P}_i$, $\breve{P}_i$ and $P$ are respectively the steady-state error covariance of local Kalman filter, our estimator and centralized Kalman filter. We see that the proposed scheme outperforms the local Kalman filter by at least $50\%$ for each sensor.
\begin{figure}[]
	\centering
	\includegraphics[width=0.5\textwidth]{}
	\caption{(a) The position and topology of $m$ sensors in the $N\times N$ grid lines; (b) The estimate variance of centralized Kalman filter; (c) The estimate variance of local Kalman filter; (d) The estimate variance of our estimators in 10000 experiments.}
	\label{fig:heatmap}
\end{figure}

\begin{table*}[]
	\small
	\caption{Performance Improvement in Comparison with Local Kalman filter}
	\label{table:improvement}
	\begin{tabular}{m{4.05cm}<{\centering} *{15}{m{0.49cm}<{\centering}}}
		\hline
		 Sensor index $i$
		& 1  &2 & 3 & 4 & 5 & 6 &7 & 8 & 9 & 10 & 11  & 12 & 13 & 14 & 15\\ \hline
		Local KF performance $\varrho_{i1}$ & 1.94 & 1.94 & 1.96 & 1.96 & 1.94 & 1.93 & 1.97 & 1.94 & 1.95 & 1.94& 1.92 & 1.94 & 1.95 & 1.94 & 1.95\\\hline
		Our estimator performance $\varrho_{i2}$ & 1.26 & 1.35 & 1.31 & 1.31 & 1.26 & 1.13 & 1.22 & 1.21 & 1.23 & 1.44& 1.12 & 1.22 & 1.18 & 1.35 & 1.18\\\hline
		Improvement $\varrho_{i1}-\varrho_{i2}$ & 68\%  &59\% & 65\% & 65\% & 68\% & 80\% & 75\% & 73\% & 71\% & 50\% & 80\%  & 72\% & 76\% & 59\% & 77\%  \\\hline
	\end{tabular}
\end{table*}

{\color{blue}\subsection{Comparison with existing algorithms}
We further compare the performance of Algorithm~\ref{alg:unobservable} with those of existing algorithms: 1) centralized Kalman filter (CKF), 2) KCF2009  (\hspace{1pt}\cite{olfati2009kalman}), and 3) CMKF2018 (\hspace{1pt}\cite{li2018weightedly}), through a numerical example on inverted pendulum. 

Notice that an inverted pendulum has $n=4$ states: $x=[p;\ \dot p;\ \theta;\ \dot\theta]$, namely, the cart position, cart velocity, pendulum angle from vertical and pendulum angle velocity, respectively.
We consider the system linearized at $\theta= \dot\theta = 0$ and discretized with sampling interval $T= 0.01s$, where the detailed system equation can be found in \cite{li2021efficient} with system noise $w(k)\sim \mathcal{N}(0,0.05^2I_n)$.

In the example, $m=4$ sensors are connected as a ring to infer the system state. Let the measurement equation be
\begin{equation}
	\begin{aligned}
		y(k)&=\left[\begin{array}{cccc}
			1 & 0 & 0 & 0 \\
			1 & 0 & 0 & 0 \\
			1 & 0 & 0 & 0 \\
			0 & 0 & 1 & 0
		\end{array}\right] x(k)+v(k),
	\end{aligned}
\end{equation}
where $v(k)\sim \mathcal{N}(0,0.3^2I_m)$. Notice that sensor $4$ cannot fully observe the state space. Fig.~\ref{fig:compare} illustrates the mean square error (MSE) of its estimate on $p$ and $\theta$, respectively. The results show that our algorithm yields better estimation performance.}

\begin{figure}
	\begin{minipage}{0.5\textwidth}
		\centering
		\includegraphics[width=0.8\textwidth]{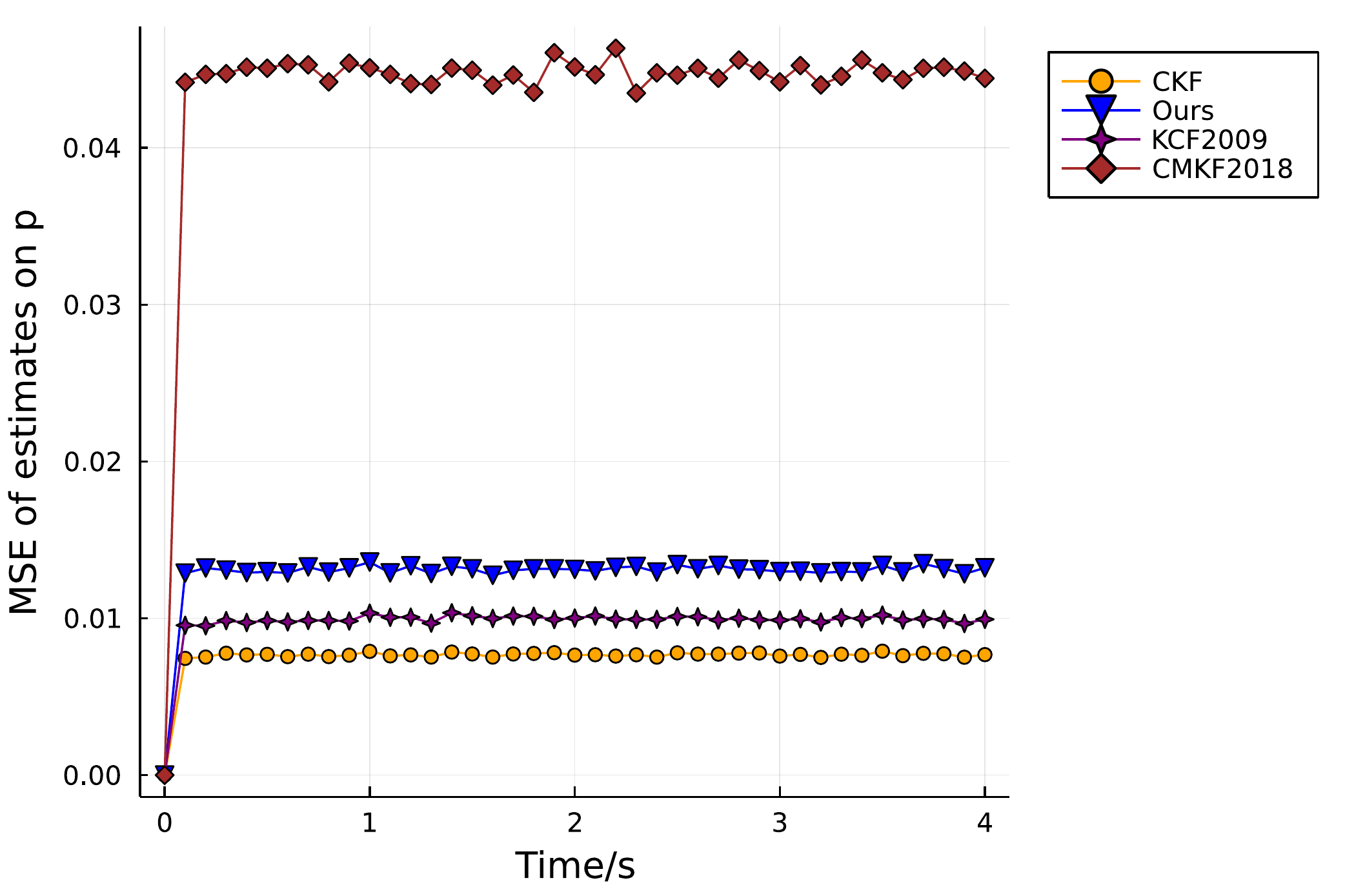}
	\end{minipage}
	\begin{minipage}{0.5\textwidth}
		\centering
		\includegraphics[width=0.8\textwidth]{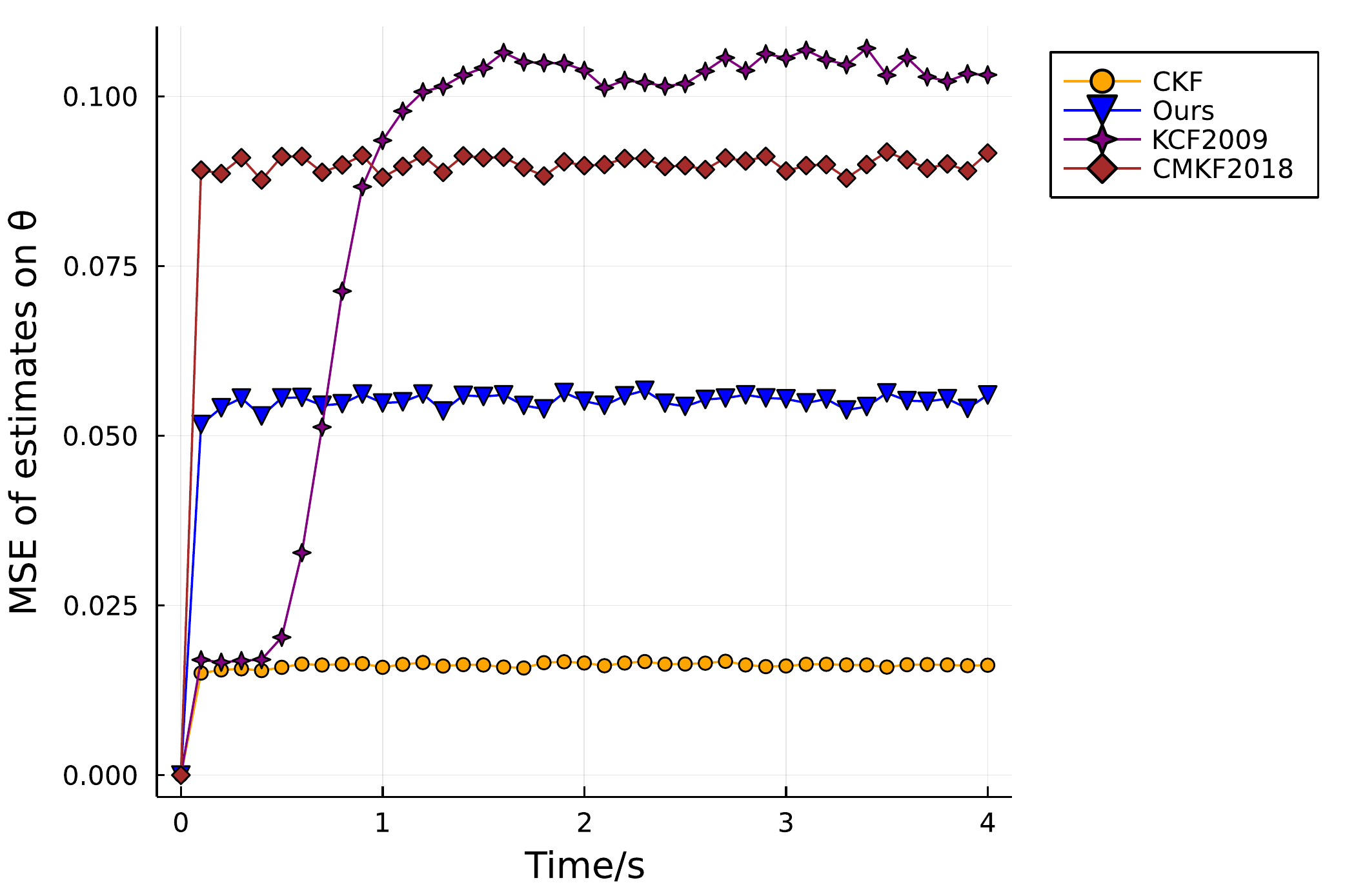}
	\end{minipage}
	\caption{Comparison of the mean square error of the estimates provided by different algorithms in 10000 experiments.}
	\label{fig:compare}
\end{figure}

\subsection{Experiment when the global knowledge on system matrix is unavailable}
Finally, notice that the proposed distributed estimator is based on a lossless decomposition of Kalman filter as developed in Section~\ref{sec:decompose}, which requires the global knowledge on 1) the system matrix $A$, 2) the measurement matrix $C$, and 3) noise covariance matrix $Q$ and $R$. In the case that certain part of $A$, $C$, $Q$ and $R$ are unknown, before running Algorithm~\ref{alg:unobservable} or \ref{alg:n<m}, each sensor can broadcast its local parameters. In this way, every sensor can obtain the system parameters it needs within finite steps. 

To quantify the overhead incurred by this initialization, i.e., broadcasting the parameters, in the third example, we conduct an experiment using $m=15$ raspberry pis equipped with temperature sensors which run the proposed distributed estimation algorithm every minute. In our experiment, it is assumed that the sensors do not have global information on $C$ and $R$. Thus, let each of them broadcast its $C_i$ and $R_i$ at the starting phase so that every sensor can obtain system parameters it needs.

The mean traffic of a sensor with $3$ neighbors is shown in Fig.~\ref{fig:traffic}. It turns out, compared with the centralized Kalman filter, our solution induces lower communication burden even with the additional effort on initial broadcasting. Obviously, the merits become more apparent with the increasing scale of sensor networks.

\begin{figure}[!h]
	\centering
	\includegraphics[width=0.47\textwidth]{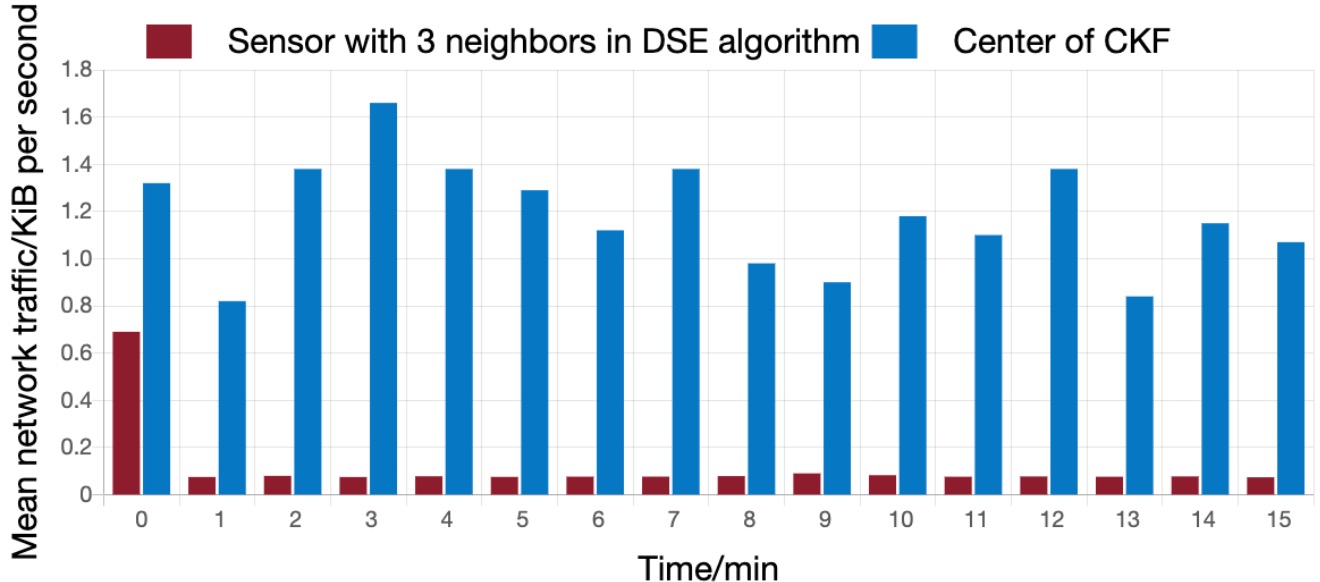}
	\caption{Mean network traffic v.s. time.}
	\label{fig:traffic}
\end{figure}

\section{Conclusion}\label{sec:conclusion}
  
In this paper, the problem of distributed state estimation has been studied for an LTI Gaussian system. We investigate both cases where $m>n$ and $m\leq n$, and propose distributed estimators for both cases to introduce low communication cost. The local estimator is proved to be stable at each sensor side, in the sense that the covariance of estimation error is proved to be bounded and the asymptotic error covariance can also be derived. Our major merit lays in reformulating the problem of distributed estimation to that of linear system synchronization. 

\appendices

\section{Proof of Lemma~\ref{lmm:observable}}
We will prove by contradiction. If $(X^T+qp^T,q)$ is not controllable, then we can find some $s$, such that the rank of $\begin{bmatrix}
		X^T+qp^T-sI & q
	\end{bmatrix}$ is strictly less than $n$. Therefore, there exists a non-zero $v$, such that
	\[v^T\begin{bmatrix}
		X^T+qp^T-sI & q
	\end{bmatrix} = 0,\]
	which implies that
	\[
	(X+pq^T)v - sv = 0,\,q^T v = 0.
	\]
Therefore $(X+pq^T)v - sv = 0$ and $Xv -sv = 0$, implying that $s$ is an eigenvalue of both $X$ and $X+pq^T$, which contradicts with the assumption $X$ and $X+pq^T$ do not share eigenvalues. We thus complete the proof.

\section{Proof of Lemma~\ref{lmm:regular}}
    We will prove this lemma by construction. Towards the end, let us next consider the following equation:
	\begin{equation}\label{eqn:T2}
		T [p, Xp, \cdots, X^{n-1}p] = T R_X= [q, Yq, \cdots, Y^{n-1}q] = R_Y,
	\end{equation}
	where $R_X= [p, Xp, \cdots, X^{n-1}p]$ and $R_Y = [q, Yq, \cdots, Y^{n-1}q] $.
	
	Since $(X,p)$ is controllable, $R_X$ is full rank and thus invertible, and $T = R_YR_X^{-1}$ solves \eqref{eqn:T2}. Clearly $Tp = q$. In what follows, we shall prove that $TX=YT$. To this end, let us denote the characteristic polynomial of $X$ as $\varphi(s) = s^n + \alpha_{n-1}s^{n-1} + \ldots \alpha_0$.
	It is noted that
	\begin{equation}
		\begin{split}
			T X^np &= T(-\alpha_{n-1}X^{n-1}-\alpha_{n-2}X^{n-2}-\cdots-\alpha_0I)p\\&=(-\alpha_{n-1}Y^{n-1}q-\alpha_{n-2}Y^{n-2}q-\cdots-\alpha_0q)=Y^n q,
		\end{split}
	\end{equation}
	where the first and the last equality is due to Carley-Hamilton and the second equality is from the fact $TR_X = R_Y$. As a result
	\[
	TXR_X= T [Xp, \cdots, X^{n}p] =  [ Yq, \cdots, Y^{n}q] = YR_Y,
	\]
	Hence, $TX = YR_YR_X^{-1} = YT$, which finishes the proof.

\section{Proof of Lemma~\ref{lmm:localeqv}}\label{sec:app_localeqv}

	1) From \eqref{eqn:xi_z}, it is easy to verify that
	\begin{equation}
		\begin{split}
			&S \hat{\xi}_i(k)+\1_n z_i(k)\\=&(\Lambda+\1_n \beta^T) \hat{\xi}_i(k)+\1_n [y_i(k+1)-\beta^T\hat \xi_i(k)]\\=&\Lambda\hat\xi_i(k)+ \1_ny_i(k+1).
		\end{split}
	\end{equation} 
	As a result, the local filter \eqref{eqn:xi_z} has the same input-output relationship with \eqref{eqn:xi}.
	
	2) By Lemma~\ref{lmm:regular}, we know that for any $i\in\mathcal{V}$, we can find $G_i^u\in\mathbb{R}^{n\times n^u}$, such that
	\begin{equation*}
		(G_i^u)^T S^T = \left(A^u\right)^T (G_i^u)^T,\, (G_i^u)^T\beta = (C_i^uA^u)^T,
	\end{equation*}
	which implies that
	\begin{equation}
		\begin{aligned}
			G_i^uA^u - \mathbf 1_n C_i^uA^u &= SG_i^u - \1_n \beta^T G_i^u \\ &= (\Lambda + 1\beta^T)G_i^u - \1_n \beta^T G_i^u=\Lambda G_i^u  ,\\
			\beta^T G_i^u &= C_i^uA^u.
		\end{aligned}
	\end{equation}
	Furthermore,  
	\begin{equation}\label{eqn:G_i}
		\begin{aligned}
			\begin{bmatrix}
				G_i^u & 0
			\end{bmatrix}A-\mathbf1_nC_{i}A&=	\begin{bmatrix}
				G_i^uA^u & 0
			\end{bmatrix}-\1_{n}\begin{bmatrix}
				C_i^uA^u & C_i^sA^s
			\end{bmatrix}
			\\&=\Lambda\begin{bmatrix}
				G_i^u & 0
			\end{bmatrix}
			- \mathbf 1_{n}\begin{bmatrix}
				0 &C_i^sA^s
			\end{bmatrix}
			,\\
			\beta^T\begin{bmatrix}
				G_i^u & 0
			\end{bmatrix}&= \begin{bmatrix}
				C_i^uA^u &0	
			\end{bmatrix} = C_{i}A -
			\begin{bmatrix}
				0 &
				C_{i}^sA^s
			\end{bmatrix},
		\end{aligned}
	\end{equation}
	where $A$ and $C_i$ are given in \eqref{eqn:diagA} and \eqref{eqn:diagC}, respectively.
	
	For simplicity, we denote 
	\begin{equation}\label{eqn:defGi}
		G_i \triangleq \begin{bmatrix}
			G_i^u & 0
		\end{bmatrix}\in\mathbb{R}^{n\times n}.
	\end{equation}
	Moreover, let 
	\begin{equation}\label{eqn:def_e}
		\epsilon_i(k)\triangleq G_ix(k)-\hat\xi_i(k).
	\end{equation} 
	It follows from \eqref{eqn:xi} that
	\begin{equation}\label{eqn:e_i}
		\begin{split}
			&\quad\;\; \epsilon_i(k+1) = G_ix(k+1)-\hat\xi_i(k+1)\\
			&=G_iAx(k)+G_iw(k)-\Lambda\hat\xi_i(k)- \1_ny_i(k+1)\\
			&=(G_i-\1_nC_i)Ax(k)-\Lambda\hat\xi_i(k)+(G_i-\1_nC_i)w(k)\\&\quad-\1_nv_i(k+1)\\
			&=\Lambda G
			-\1_{n}\begin{bmatrix}
				0 &
				C_{i}^sA^s
			\end{bmatrix}x(k)-\Lambda\hat\xi_i(k)+(G_i-\1_nC_i)w(k)\\&\quad-\1_nv_i(k+1)\\
			&=\Lambda \epsilon_i(k)-\1_{n}C_{i}^sA^sx^s(k) +(G_i-\1_nC_i)w(k)-\1_nv_i(k+1),
		\end{split}
	\end{equation}
	where the second to last equality holds by \eqref{eqn:G_i}.
	Due to the fact that $\Lambda$ and $A^s$ are stable, we conclude that $\epsilon_i(k)$ is stable, i.e., $\cov(\epsilon_i(k))$ is bounded.
	
	One thus has
	\begin{equation}\label{eqn:z_i}
		\begin{split}
			z_i(k)&=y_i(k+1)-\beta^T\hat \xi_i(k)\\&=y_i(k+1)-\beta^T(G_ix(k)-\epsilon_i(k))\\
			&=C_i(Ax(k)+w(k))+v_i(k+1)+\beta^T\epsilon_i(k)\\&\qquad-(C_{i}A -
			\begin{bmatrix}
				0 &
				C_{i}^sA^s
			\end{bmatrix})x(k)\\
			&=\beta^T\epsilon_i(k)+C_{i}^sA^sx^s(k)+C_iw(k)+v_i(k+1).
		\end{split}
	\end{equation}
	\revise{As proved in \eqref{eqn:e_i}, $\cov(\epsilon_i(k))$ is bounded. Moreover, it follows from 
	\eqref{eqn:xs} that $C_{i}^sA^sx^s(k)$ is a linear combination of the stable parts in $x(k)$. Also, the covariance of $w(k)$ and $v_i(k+1)$ are bounded as $Q$ and $R_i$, respectively.
		We thus conclude that $z_i(k)$ is stable, i.e., the covariance of $z_i(k)$ is always bounded. }

\section{Proof of Lemma~\ref{lmm:matrixdecompse}}\label{sec:app_matrixdecomp}
For the proof of Lemma~\ref{lmm:matrixdecompse}, we need the following result:
\begin{lemma}\label{lmm:vec_poly}
	Given any vector $w \in\mathbb{R}^n$. Suppose $(S^T, v)$ is controllable, then there exists a polynomial $p$ of at most $n-1$ degree, such that $w$ can be decomposed as
	\begin{equation}\label{eqn:vectorw}
		w^T=v^T \varphi(S).
	\end{equation}
\end{lemma}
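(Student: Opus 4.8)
The plan is to read off the result directly from the controllability of $(S^T,v)$. Recall that $(S^T,v)$ being controllable means the controllability matrix
\[
\mathcal{R}\triangleq\begin{bmatrix} v & S^T v & (S^T)^2 v & \cdots & (S^T)^{n-1}v\end{bmatrix}\in\mathbb{R}^{n\times n}
\]
has full rank $n$, hence is invertible. Transposing, the $n$ row vectors $v^T,\; v^TS,\; v^TS^2,\;\ldots,\; v^TS^{n-1}$ are linearly independent and therefore form a basis of $\mathbb{R}^{1\times n}$. This is the only structural fact I need.

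Given an arbitrary target $w\in\mathbb{R}^n$, I would expand $w^T$ in this basis: there exist scalars $c_0,\ldots,c_{n-1}$ with $w^T=\sum_{j=0}^{n-1}c_j\, v^TS^j$, and explicitly $\col(c_0,\ldots,c_{n-1})=\mathcal{R}^{-T}w$. Collecting the powers of $S$ and setting $p(s)\triangleq\sum_{j=0}^{n-1}c_j s^j$, which has degree at most $n-1$ by construction, yields $w^T=v^T p(S)$, which is exactly the claimed decomposition $w^T=v^T\varphi(S)$ with $\varphi\equiv p$. Note that no appeal to Cayley--Hamilton is needed here, since the degree bound falls out of the construction itself.

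Since this is a one-line consequence of the rank condition, there is no genuine obstacle; the only point deserving a word of care is the passage from the usual (column-span) form of the controllability rank test to the row-span statement used above, which is handled by the transpose. As a byproduct, invertibility of $\mathcal{R}$ also shows that the coefficients $c_j$, and hence the polynomial $p$ of degree $\le n-1$, are uniquely determined by $w$.
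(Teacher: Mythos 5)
Your argument is correct and is essentially identical to the paper's: both reduce $w^T=v^T\varphi(S)$ to the linear system $w=\mathcal{R}\,\col(c_0,\ldots,c_{n-1})$ and invoke the invertibility (full rank) of the controllability matrix $\mathcal{R}$ of $(S^T,v)$. One small slip in your "explicit" byproduct formula: transposing $w^T=\sum_{j}c_j\,v^TS^j$ gives $w=\mathcal{R}c$, so the coefficient vector is $\mathcal{R}^{-1}w$ rather than $\mathcal{R}^{-T}w$; this does not affect the existence or uniqueness claims.
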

\begin{proof}
	Suppose $\varphi(S)=\alpha_0 I + \alpha_1 S + \cdots +\alpha_{n-1} S^{n-1}$. We thus rewrite \eqref{eqn:vectorw} as
	\begin{equation}
		w=  \begin{bmatrix}
			v& S^Tv& \cdots &\left( S^{n-1}\right)^Tv
		\end{bmatrix}\begin{bmatrix}
			\alpha_0\\
			\vdots\\
			\alpha_{n-1}
		\end{bmatrix}.
	\end{equation}
	Since $(S^T, v)$ is controllable, the first matrix on the RHS of the equation has a column rank of $n$ and hence the above equation is always solvable. We therefore complete the proof.
\end{proof}

Now we are ready to prove Lemma~\ref{lmm:matrixdecompse}. Notice that any matrix $W$ can be decomposed as
\begin{equation}\label{eqn:matrixdecompose}
	W = \begin{bmatrix}
		w_1^T\\\vdots \\w_n^T
	\end{bmatrix}= e_1 w_1^T + e_2 w_2^T + \cdots e_n w_n^T.
\end{equation}
Since $(S^T, \beta)$ is controllable, \eqref{eqn:Wdecompse} can be concluded by applying Lemma \ref{lmm:vec_poly} to \eqref{eqn:matrixdecompose}.

\section{Proof of Theorem~\ref{lmm:mdl_reduce}}\label{sec:app_mdl_reduce}
To begin with, we note that the following relation holds true at any $k\geq 0$:
\begin{equation}
	F_i S^k =  \Big[\sum_{j=1}^n H_j p_{ij}(S) \Big]S^k= \sum_{j=1}^n H_j S^k p_{ij}(S), 
\end{equation}
where the last equality holds as $S$ is commutable with any polynomials of itself. Then let us consider the output of system \eqref{eqn:fullmodel}:
\begin{equation}
	\begin{aligned}
		\hat x(k+1)&=\sum_{t=0}^k F(I_m\otimes S)^t(I_m\otimes \textbf{1}_n)z(k-t)\\
		&= \sum_{t=0}^{k} \Big(\sum_{i=1}^{m}F_iS^t\textbf{1}_nz_i(k-t)\Big) \\
		&= \sum_{t=0}^{k} \Big(\sum_{i=1}^{m}\sum_{j=1}^n H_j S^t p_{ij}(S)\textbf{1}_nz_i(k-t)\Big) \\
		&= \sum_{t=0}^{k} \Big(\sum_{j=1}^n H_j S^t \big[\sum_{i=1}^{m} p_{ij}(S)\textbf{1}_n z_i(k-t)\big]\Big)\\
		&=\sum_{t=0}^{k} H(I_n\otimes S)^t T z(k-t)=\tilde{x}(k+1).
	\end{aligned}
\end{equation}

\revise{Notice that \eqref{eqn:aft_mdl_reduce} has $z(k)$ as its input and $\tilde{x}(k)$ as its output. As proved, given any $z(k)$, \eqref{eqn:fullmodel} and \eqref{eqn:aft_mdl_reduce} yield the same output, i.e., $\tilde{x}(k)=\hat x(k+1)$. Hence, we conclude that the two systems have the identical transfer functions. The proof is thus completed.}

\section{Proof of Theorem \ref{thm:observable}}\label{sec:appB}
For simplicity, we first define aggregated vectors and matrices as below:
\begin{equation}
\begin{split}
\eta(k)&\triangleq 
\begin{bmatrix}
\eta_1(k)\\
\vdots\\
\eta_m(k)
\end{bmatrix},\hat\xi(k)\triangleq \begin{bmatrix}
\hat\xi_1(k)\\
\vdots\\
\hat\xi_m(k)
\end{bmatrix},\\
L_\eta&\triangleq 
\begin{bmatrix}
\tilde L_1 & & \\
& \ddots & \\
& &\tilde L_m
\end{bmatrix}.\\
\end{split}
\end{equation}
Then, we can rewrite \eqref{eqn:eta} in matrix form as:
\begin{equation}\label{eqn:eta_matrix}
\begin{split}
&\eta(k+1)\\=&(I_m\otimes\tilde S)\eta(k)+L_\eta z(k)-[I_m\otimes (\tilde B\tilde \Gamma)](\mathcal{L}_{\mathcal{G}}\otimes I_{n})\eta(k)\\
=&[I_m\otimes \tilde S-\mathcal{L}_{\mathcal{G}}\otimes (\tilde B\tilde\Gamma)]\eta(k)+L_\eta z(k).
\end{split}
\end{equation}

Next let us denote the average state of all agents as 
	\begin{equation}
	\bar{\eta}(k)\triangleq \frac{1}{m} \sum_{i=1}^m \eta_{i}(k)=\frac{1}{m} (\1_m^T\otimes I_{mn})\eta(k).
	\end{equation}
	Since $\1_m^T \mathcal{L}_{\mathcal{G}} =0$, it holds that
	\begin{equation}\label{eqn:average_matrix}
	\begin{split}
	&\bar{\eta}(k+1) \\= &\frac{1}{m} (\1_m^T\otimes I_{mn})\Big([I_m\otimes \tilde S-\mathcal{L}_{\mathcal{G}}\otimes (\tilde B\tilde\Gamma)]\eta(k)+L_\eta z(k)\Big)\\=&\tilde{S}\bar{\eta}(k)+\frac{1}{m}(\1_m^T\otimes I_{mn})L_\eta z(k).
	\end{split}
	\end{equation}
	Furthermore, we define the state deviation of each sensor as $\delta_i(k)\triangleq \eta_i(k)-\bar\eta(k)$ and then stack them as an aggregated vector $\delta(k)\triangleq \col(\delta_1(k), \cdots, \delta_m(k))$.  Combining \eqref{eqn:eta_matrix} and \eqref{eqn:average_matrix} yields the dynamic equation of $\delta(k)$:
	\begin{equation}\label{eqn:error_matrix}
	\begin{split}
	\delta(k+1)&=[I_m\otimes \tilde S-\mathcal{L}_{\mathcal{G}}\otimes (\tilde B\tilde\Gamma)]\delta(k)+L_\delta z(k),
	\end{split}
	\end{equation}
	where \begin{equation}\label{eqn:L_delta}
		L_\delta \triangleq [(I_{m}-\frac{1}{m}\1_m\1_m^T)\otimes I_{mn}]L_\eta.
	\end{equation}
	
	Recall that the Laplacian matrix of an undirected graph is symmetric. Therefore, we can always find an unitary matrix $\Phi\triangleq[\frac{1}{\sqrt{m}}\1_m,\phi_2,\cdots,\phi_m]$, such that $\mathcal{L}_{\mathcal{G}}$ is diagonalized as
	\begin{equation}
	\diag(0,\mu_2,\cdots,\mu_m)=\Phi^T\mathcal{L}_{\mathcal{G}}\Phi.
	\end{equation}
	Using the property of Kronecker product yields that
	\begin{equation}
	\begin{split}
	(\Phi \otimes I_{mn})^T[I_m\otimes \tilde S-\mathcal{L}_{\mathcal{G}}\otimes (\tilde B\tilde\Gamma)](\Phi \otimes I_{mn})\\
	=\diag(\tilde S,\tilde S-\mu_2\tilde B\tilde\Gamma,...,\tilde S-\mu_m\tilde B\tilde\Gamma).
	\end{split}
	\end{equation}
	
	Denote 
	\begin{equation}\label{eqn:delta}
	\tilde\delta(k)\triangleq(\Phi\otimes I_{mn})^T\delta(k).
	\end{equation}
	One has
	\begin{equation}\label{eqn:tildedelta}
	\begin{split}
	\tilde\delta(k+1)=A_{\tilde\delta} \tilde\delta(k)+L_{\tilde\delta} z(k),
	\end{split}
	\end{equation}
	where $A_{\tilde\delta} \triangleq diag(\tilde S,\tilde S-\mu_2\tilde B\tilde\Gamma,\cdots,\tilde S-\mu_m\tilde B\tilde\Gamma)$ and $L_{\tilde\delta} \triangleq [(\Phi^T-\frac{1}{m}\Phi^T\1_m\1_m^T)\otimes I_{mn}]L_\eta$. 
	
	We next study the stability of above system. To proceed, let us partition the state into two parts, i.e., $\tilde\delta(k)=[\tilde\delta^T_1(k),\tilde\delta^T_2(k)]^T$, where $\tilde\delta_1(k)\in\mathbb{R}^{mn}$ is a vector consisting of the first $mn$ entries of $\tilde\delta(k)$ and satisfies 
	\begin{equation}
	\tilde\delta_1(k) = \frac{1}{\sqrt{m}}\sum_{i=1}^m \delta_i(k) = \frac{1}{\sqrt{m}}\sum_{i=1}^m (\eta_i(k)-\bar\eta(k))=0.
	\end{equation}  
	Therefore, $\tilde\delta_1(k)$ is stable. Moreover, it holds that
	\begin{equation}\label{eqn:delta2}
	\tilde\delta_2(k+1) = \diag(\tilde S-\mu_2\tilde B\tilde\Gamma,\cdots,\tilde S-\mu_m\tilde B\tilde\Gamma)\tilde\delta_2(k)+\tilde{L}_{\tilde\delta} z(k),
	\end{equation}
	where $\tilde{L}_{\tilde\delta}$ consists the last $(m^2n-mn)$ rows of $\tilde{L}_{\tilde\delta}$. 
	\revise{In view of Lemma \ref{lmm:poleplacement}, $\diag(\tilde S-\mu_2\tilde B\tilde\Gamma,\cdots,\tilde S-\mu_m\tilde B\tilde\Gamma)$ is Schur. Recalling Lemma~\ref{lmm:localeqv}, $z(k)$ is also stable. We therefore conclude that \eqref{eqn:delta2} is stable, which further implies the stability of \eqref{eqn:tildedelta}. }
	
	On the other hand, one derives from \eqref{eqn:z_i} that
	\revise{
	\begin{equation}
	z(k) = Cw(k)+v(k+1)+(I_m\otimes \beta^T)\epsilon(k)+C^sA^sx^s(k),
	\end{equation}}
	where $\epsilon(k)\triangleq \col(\epsilon_1(k), \cdots, \epsilon_m(k))$ and
	\begin{equation*}
		C^s = {\left[\begin{array}{c}
				C_{1}^s \\
				\vdots \\
				C_{m}^s
			\end{array}\right]}.
	\end{equation*} Recalling \eqref{eqn:e_i}, it follows that
\revise{
\begin{equation*}
\begin{split}
\epsilon(k+1) = (I_m\otimes \Lambda)\epsilon(k)&+ 
\begin{bmatrix}
G_1-\1_n C_1\\
\vdots\\
G_m-\1_n C_m
\end{bmatrix}w(k)\\-(I_m\otimes \1_n) &v(k+1)-(I_m\otimes \1_n) C^sA^sx^s(k)\\=(I_m\otimes \Lambda) \epsilon(k)&+W_\epsilon w(k) + V_\epsilon v(k+1)+A_\epsilon x^s(k),
\end{split}
\end{equation*}
where \begin{equation}
\begin{split}
W_\epsilon &\triangleq \begin{bmatrix}
	G_1-\1_n C_1\\
	\vdots\\
	G_m-\1_n C_m
\end{bmatrix}, \\V_\epsilon&\triangleq-(I_m\otimes \1_n), A_\epsilon\triangleq -(I_m\otimes \1_n) C^sA^s.
\end{split}
\end{equation}}
	
	By combining the above dynamics with \eqref{eqn:xs}, one derives that
	\begin{equation}
	\begin{split}
	\begin{bmatrix}
	\tilde\delta(k+1)\\
	\epsilon(k+1)\\
	x^s(k+1)
	\end{bmatrix}=  &\begin{bmatrix}
	A_{\tilde\delta} & L_{\tilde{\delta}}(I_m\otimes \beta^T) &L_{\tilde{\delta}}C^sA^s\\
	 & I_m\otimes \Lambda &A_\epsilon\\
	 &  & A^s
	\end{bmatrix}\begin{bmatrix}
	\tilde\delta(k)\\
	\epsilon(k)\\
	x^s(k)
	\end{bmatrix}\\&+\begin{bmatrix}
	L_{\tilde{\delta}} C\\
	W_\epsilon\\
	J
	\end{bmatrix}w(k)+\begin{bmatrix}
	L_{\tilde{\delta}} \\
	V_\epsilon\\
	0
	\end{bmatrix}v(k+1)
	\end{split}
	\end{equation}
	
	\revise{Notice that the above system is stable}. Hence, we calculate the covariance at both sides and in steady state. It holds that $W_r$, the steady state covariance, is the unique solution of below Lyapunov equation:
	\begin{equation}\label{eqn:Lyp1}
	\begin{split}
	W_r = A_rW_rA_r^T + \begin{bmatrix}
	L_{\tilde{\delta}} C\\
	W_\epsilon\\
	J
	\end{bmatrix} Q \begin{bmatrix}
	L_{\tilde{\delta}} C\\
	W_\epsilon\\
	J
	\end{bmatrix}^T+ \begin{bmatrix}
	L_{\tilde{\delta}} \\
	V_\epsilon\\
	0
	\end{bmatrix} R \begin{bmatrix}
	L_{\tilde{\delta}} \\
	V_\epsilon\\
	0
	\end{bmatrix}^T,
	\end{split}
	\end{equation}
where $$A_r=\begin{bmatrix}
	A_{\tilde\delta} & L_{\tilde{\delta}}(I_m\otimes \beta^T) &L_{\tilde{\delta}}C^sA^s\\
	& I_m\otimes \Lambda &A_\epsilon\\
	&  & A^s
\end{bmatrix}.$$
	
	In view of \eqref{eqn:delta}, it holds that
	\begin{equation}
	\delta(k) = \begin{bmatrix}
	\Phi \otimes I_{mn} & 0 &0
	\end{bmatrix}
	\begin{bmatrix}
	\tilde\delta(k)\\
	\epsilon(k)\\
	x^s(k)
	\end{bmatrix}=\Phi_\delta \begin{bmatrix}
	\tilde\delta(k)\\
	\epsilon(k)\\
	x^s(k)
	\end{bmatrix},
	\end{equation}
where \begin{equation}
\Phi_\delta \triangleq \begin{bmatrix}
	\Phi \otimes I_{mn} & 0 &0
\end{bmatrix}.
\end{equation}
	Moreover, let us denote
	\begin{equation}
	\bar{e}_i(k) \triangleq \breve{x}_i(k)-\hat{x}(k),
	\end{equation}
	which is the bias from local estimate $\breve{x}_i(k)$ to optimal Kalman one. Combining \eqref{eqn:localdecompose} and \eqref{eqn:xvseta} yields
	\begin{equation}
	\hat{x}(k) = F\sum_{i=1}^m \eta_i(k).
	\end{equation}
	One thus has
	\begin{equation}\label{eqn:bar_e}
	\bar{e}_i(k) = mF(\eta_i(k)-\bar{\eta}(k))=mF\delta_i(k).
	\end{equation}
	Stacking such errors from all sensors together yields
	\begin{equation}
	\begin{split}
	\bar{e}(k) = (I_m\otimes mF) \delta(k)=(I_m\otimes mF)\Phi_\delta \begin{bmatrix}
	\tilde\delta(k)\\
	\epsilon(k)
	\end{bmatrix}.
	\end{split}
	\end{equation}
	Therefore, in steady state,  the covariance of $\bar{e}(k)$ can be calculated as
	\begin{equation}\label{eqn:Lyp2}
	\bar W = [(I_m\otimes mF)\Phi_\delta]W_r[(I_m\otimes mF)\Phi_\delta]^T.
	\end{equation}
	Finally, for any sensor $i$, let us denote its estimation error as
	\begin{equation}\label{eqn:breve_e}
	\begin{split}
	\breve{e}_i(k) &= \breve{x}_i(k)-x(k)\\&=(\breve{x}_i(k)-\hat{x}(k))+(\hat{x}(k)-x(k)) \\&= \bar{e}_i(k)+\hat{e}(k),
	\end{split}
	\end{equation}
	where $\hat{e}(k)$ is the estimation error of Kalman filter. Since Kalman filter is optimal, $\bar{e}_i(k)$ is orthogonal to $\hat{e}(k)$.
	
	By defining $\breve{e}(k)\triangleq \col(\breve{e}_1(k) , \cdots, \breve{e}_m(k))$, we therefore have
	\begin{equation}
	\breve{e}(k) = \bar{e}(k)+\1_m \otimes \hat{e}(k).
	\end{equation}
	Calculating the covariance of both sides yields
	\begin{equation}\label{eqn:covariance}
	\breve{W} = \bar W + (\1_m\1_m^T )\otimes P,
	\end{equation}
	where $	\breve{W}$ is the steady-state covariance of $\breve{e}(k)$ and $P$ is given in \eqref{eqn:KFcov}. Notice that the above calculation also indicates the boundedness of $\cov(\breve{e}(k))$ at any time.

\modify{\section{Proof of Corollary~\ref{col:error}}
As proved in Appendix-\ref{sec:appB}, one can exactly calculate $\bar{W}$ by solving Lyapunov equations \eqref{eqn:Lyp1} and \eqref{eqn:Lyp2}. The result is thus obvious by invoking \eqref{eqn:covariance}.}

\section{Proof of Theorem~\ref{thm:general}}\label{sec:appC}
To proceed, let us introduce the following lemma:
\begin{lemma}\label{lmm:cauchy}
	Given any random variables $\kappa_1,\cdots,\kappa_\tau$, it follows that
	\begin{equation}\label{eqn:cauchy}
	\mathbb{E}\Big[\Big|\Big|\sum_{i=1}^\tau \kappa_i\Big|\Big|^2\Big]\leq
	\Big(\sum_{i=1}^\tau \sqrt{\mathbb{E}[||\kappa_i||^2]}\Big)^2.
	\end{equation}
\end{lemma}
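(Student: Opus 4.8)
The plan is to reduce the statement to two applications of the Cauchy--Schwarz inequality after expanding the squared norm. First I would expand the left-hand side as
\begin{equation*}
	\mathbb{E}\Big[\Big|\Big|\sum_{i=1}^\tau \kappa_i\Big|\Big|^2\Big]=\mathbb{E}\Big[\sum_{i=1}^\tau\sum_{j=1}^\tau \kappa_i^T\kappa_j\Big]=\sum_{i=1}^\tau\sum_{j=1}^\tau \mathbb{E}[\kappa_i^T\kappa_j],
\end{equation*}
where the interchange of the finite double sum with the expectation is legitimate once every $\kappa_i$ has a finite second moment; if this fails, the right-hand side of \eqref{eqn:cauchy} is $+\infty$ and there is nothing to prove, so we may assume it throughout.

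Next I would bound each cross term separately. For fixed $i,j$, the pointwise Cauchy--Schwarz inequality gives $\kappa_i^T\kappa_j\leq ||\kappa_i||\,||\kappa_j||$ almost surely, hence $\mathbb{E}[\kappa_i^T\kappa_j]\leq \mathbb{E}[||\kappa_i||\,||\kappa_j||]$. Applying the Cauchy--Schwarz inequality a second time, now in $L^2$ to the nonnegative scalar random variables $||\kappa_i||$ and $||\kappa_j||$, yields $\mathbb{E}[||\kappa_i||\,||\kappa_j||]\leq \sqrt{\mathbb{E}[||\kappa_i||^2]}\sqrt{\mathbb{E}[||\kappa_j||^2]}$. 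Substituting this into the expression above,
\begin{equation*}
	\mathbb{E}\Big[\Big|\Big|\sum_{i=1}^\tau \kappa_i\Big|\Big|^2\Big]\leq \sum_{i=1}^\tau\sum_{j=1}^\tau \sqrt{\mathbb{E}[||\kappa_i||^2]}\sqrt{\mathbb{E}[||\kappa_j||^2]}=\Big(\sum_{i=1}^\tau \sqrt{\mathbb{E}[||\kappa_i||^2]}\Big)^2,
\end{equation*}
which is exactly \eqref{eqn:cauchy}.

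An equivalent route, if one prefers a more structural statement, is to observe that $\kappa\mapsto\sqrt{\mathbb{E}[||\kappa||^2]}$ is a seminorm on the space of square-integrable random vectors — its triangle inequality being precisely the two-term case of the bound above — and then apply the triangle inequality $\tau-1$ times and square. Either way, I do not expect a genuine obstacle here: the lemma is an elementary consequence of Cauchy--Schwarz applied twice, and the only point worth stating explicitly is the finite-second-moment caveat that makes each expectation well defined and renders the inequality nontrivial.
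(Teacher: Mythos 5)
Your proof is correct and follows essentially the same route as the paper's: expand the squared norm into the double sum of cross terms $\mathbb{E}[\kappa_i^T\kappa_j]$, bound each by $\sqrt{\mathbb{E}[||\kappa_i||^2]}\sqrt{\mathbb{E}[||\kappa_j||^2]}$ via Cauchy--Schwarz, and refactor the double sum as a square. The only cosmetic difference is that you obtain the cross-term bound in two steps (pointwise Cauchy--Schwarz followed by $L^2$ Cauchy--Schwarz on scalars) whereas the paper invokes Cauchy--Schwarz once for the inner product $\mathbb{E}[\kappa_i^T\kappa_j]$; the integrability caveat you add is a sensible but minor refinement.
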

\begin{proof}
	In order to prove \eqref{eqn:cauchy}, it is equivalent to show that
	\begin{equation}
	\sum_{i=1}^\tau\sum_{j=1}^\tau \mathbb{E}[\kappa_i^T\kappa_j] \leq \sum_{i=1}^\tau\sum_{j=1}^\tau \sqrt{\mathbb{E}[\kappa_i^T\kappa_i]} \sqrt{\mathbb{E}[\kappa_j^T\kappa_j]}.
	\end{equation}
	By Cauchy-Schwarz inequality, it holds for any $i,j$ that
	\begin{equation}
	\mathbb{E}[\kappa_i^T\kappa_j] \leq \sqrt{\mathbb{E}[\kappa_i^T\kappa_i]} \sqrt{\mathbb{E}[\kappa_j^T\kappa_j]}.
	\end{equation}
	The proof is thus completed.
\end{proof}

We next prove Theorem~\ref{thm:general}. Applying similar arguments to Theorem~\ref{thm:optimal}, it is easy to see from the consistency condition \eqref{eqn:consistency} that the average of local estimates coincides with the optimal Kalman filter. We hence focus on the analysis of estimation error covariance.

Let us denote $\delta_i(k)\triangleq \eta_i(k)-1/m \sum_{i=1}^m \eta_i(k)$ and $\varpi_i(k)\triangleq\omega_i(k)-1/m \sum_{i=1}^m \omega_i(k)$. 
Moreover, we define \begin{align*}
		&\delta(k) \triangleq \col(\delta_1(k),\cdots,\delta_m(k)),\\
		&\varpi(k)\triangleq \col(\varpi_1(k),\cdots,\varpi_m(k)).
\end{align*}
It hence follows from \eqref{eqn:general} that
\begin{equation}
\begin{bmatrix}
\delta(k+1)\\
\varpi(k)
\end{bmatrix}=\begin{bmatrix}
\mathcal{D}(k) & \mathcal{J}(k)\\
\widetilde{\mathcal{B}} & \widetilde{\mathcal{A}} 
\end{bmatrix}\begin{bmatrix}
\delta(k)\\
\varpi(k-1)
\end{bmatrix}+\begin{bmatrix}
L_\delta\\
0
\end{bmatrix}z(k),
\end{equation}
where $L_\delta$ is defined in \eqref{eqn:L_delta}, and
\begin{align*}
&\mathcal{D}(k) \triangleq I_m\otimes \tilde{S}-\mathcal{L}(k)\otimes (\tilde{B} \tilde{\Gamma}\mathcal{B}),\\
&\mathcal{J}(k) \triangleq -\mathcal{L}(k)\otimes (\tilde{B} \tilde{\Gamma}\mathcal{A}),
\widetilde{\mathcal{A}} \triangleq I_{m} \otimes \mathcal{A},\;\widetilde{\mathcal{B}} \triangleq I_{m} \otimes \mathcal{B},
\end{align*}
with $\mathcal{L}(k)\triangleq\{\mathcal{L}_{i,j}(k)\}$ being the (random) Laplacian matrix with respect to the weights $\{a_{ij}\gamma_{ij}(k)\}$. Namely,
\begin{equation}
	\mathcal{L}_{i,j}(k) \triangleq
	\left\{\begin{array}{l}{\sum_{l=1}^m a_{il}\gamma_{il}(k),\;\quad j= i} \\ {-a_{ij}\gamma_{ij}(k), \qquad\quad\; j\ne i }  \end{array}.\right.
\end{equation}
For simplicity, Let 
\begin{align}
\mathcal{Q}(k) \triangleq \begin{bmatrix}
\mathcal{D}(k) & \mathcal{J}(k)\\
\widetilde{\mathcal{B}} & \widetilde{\mathcal{A}} 
\end{bmatrix}.
\end{align}
Since $\delta_i(0)=0$ and $\varpi_i(0)=0$ hold for any $i$, it follows that
\begin{equation}
\begin{bmatrix}
	\delta(k+1)\\
	\varpi(k)
\end{bmatrix}=\sum_{t=0}^{k}\bigg(\mathcal{Q}(k, t+1)\begin{bmatrix}
L_\delta\\
0
\end{bmatrix}z(t)\bigg),
\end{equation}
where the transition matrix is defined as
$$
\mathcal{Q}(k, s)=\left\{\begin{array}{cc}
\mathcal{Q}(k) \mathcal{Q}(k-1) \cdots \mathcal{Q}(s), & k\geq s, \\
I, & k<s .
\end{array}\right.
$$

Then consider the update of any agent $i$. From the above equation, we conclude that 
\revise{\begin{equation}\label{eqn:delta_general}
\delta_i(k+1) = \sum_{t=0}^{k}\Pi_i(k, t+1)z(t),
\end{equation}}
where $\Pi_i(k, t+1)$ refers to the $i$-th row of matrix $\mathcal{Q}(k, t+1)[L_\delta \quad 0]^T.$
Namely, the consensus error of agent $i$, i.e. $\delta_i(k+1)$, is caused by the sequence of residuals $\{z(t)\}$, where $t\leq k$. For simplicity, we denote
$$\kappa_i(k,t) \triangleq\Pi_i(k, t+1)z(t).$$ \revise{Since $\cov(z(t))$ is bounded at any time, in view of \eqref{eqn:consensus}, the following statement holds for any $t\leq k$:
\begin{equation}
\mathbb{E}[||\kappa_i(k,t)||^2] \leq c\rho^{k-t}.
\end{equation}}
Therefore, one has that
\begin{equation}
\begin{split}
cov(\delta_i(k+1)) &= \mathbb{E}[||\delta_i(k+1)||^2]=\mathbb{E}\Big[\Big|\Big|\sum_{t=0}^{k}\kappa_i(k,t)\Big|\Big|^2\Big]\\& \leq \Big(\sum_{i=1}^\tau \sqrt{\mathbb{E}[||\kappa_i(k,t)||^2]}\Big)^2\leq\bigg(\sum_{t=0}^{k}\sqrt{c\rho^{k-t}}\bigg)^2\\&=\frac{c(1-\sqrt{\rho}^k)^2}{(1-\sqrt{\rho})^2},
\end{split}
\end{equation}
where the first inequality holds by using Lemma~\ref{lmm:cauchy}. Since $\rho\in(0,1)$, combining the above results with \eqref{eqn:bar_e} and \eqref{eqn:breve_e} yields that the estimation error is stable.

\revise{
	\begin{remark}\label{rmk:z}
	It is noted that the reformulation \eqref{eqn:xi_z} with stable input $z_i(k)$ is essential to establish the stability of local estimators. To be concrete, the stability of \eqref{eqn:delta_general} is guaranteed under the bounded input, which is key to prove the boundedness of estimation error covariance, as can be observed from \eqref{eqn:bar_e}-\eqref{eqn:covariance}.
	On the other hand, if an unstable input, e.g., $y_i(k)$ as in \eqref{eqn:xi}, is applied, we cannot conclude on the stability of local estimator even using the exponentially converged synchronization algorithms. 
\end{remark}}

  \bibliographystyle{IEEEtran}
  \bibliography{reference}
  \end{document}